\documentclass[a4paper]{article}

\usepackage{amsfonts}    
\usepackage{algorithm}
\usepackage[noend]{algorithmic}
\usepackage{geometry}
\usepackage{fullpage}
\usepackage{amsmath}
\usepackage{amsthm}
\usepackage{amssymb}
\usepackage{graphicx}
\usepackage{dsfont}
\usepackage{setspace}
\usepackage[shortlabels]{enumitem}
\setlist{nolistsep}

\newtheorem{theorem}{Theorem}
\newtheorem{lemma}[theorem]{Lemma}
\newtheorem{claim}{Claim}

\newtheorem{corollary}[theorem]{Corollary}

\newtheorem{observation}[theorem]{Observation}


\newenvironment{indentpar}[1]%
 {
 \begin{list}{}%
         {\setlength{\leftmargin}{#1}
         \setlist{nolistsep} }%
      \item[]%
 }
 {\end{list}}

\newcommand{\old}[1]{{}}
\newcommand{\versionA}[1]{{}}

\newcommand{\verAlgA}[1]{{}}

\newcommand{\bs}{\backslash}

\newcommand{\MCRA}{\textsc{MinRange}}
\newcommand{\MCRAS}{\textsc{MinRangeSpanner}}
\newcommand{\AlgMCRA}{\textsc{1DMinRA}}

\newcommand{\orderedline}{line alike}

\pagestyle{plain}



\title{On the Minimum Cost Range Assignment Problem
}

\author{Paz Carmi\footnote{Department of Computer Science, Ben-Gurion University of the Negev, Israel}
 and Lilach Chaitman-Yerushalmi
\footnote{Department of Computer Science, Ben-Gurion University of the Negev, Israel}}





\begin{document}
\maketitle


\begin{abstract}
We study the problem of assigning transmission ranges to radio stations  
placed arbitrarily in a $d$-dimensional ($d$-D) Euclidean space in order to achieve a strongly connected communication network
with minimum total power consumption.
The power required for transmitting in range $r$ is proportional to $r^\alpha$,
where $\alpha$ is typically between $1$ and $6$, depending on various environmental factors.
While this problem can be solved optimally in $1$D, in higher dimensions it is known to be $NP$-hard for any $\alpha \geq 1$.

For the $1$D version of the problem, i.e., radio stations located on a line and $\alpha \geq 1$, 
we propose an optimal $O(n^2)$-time algorithm. 
This improves the running time of the best known algorithm by a factor of $n$.
Moreover, we show a polynomial-time algorithm for finding the minimum cost range assignment in $1$D
whose induced communication graph is a $t$-spanner, for any  $t \geq 1$.

In higher dimensions, finding the optimal range assignment is $NP$-hard;
however, it can be approximated within a constant factor. 
The best known approximation ratio is for the case $\alpha=1$, where the approximation ratio is $1.5$.
We show a new approximation algorithm with improved approximation ratio of $1.5-\epsilon$, where $\epsilon>0$ is a small constant.
\end{abstract}


\section{Introduction}\label{sec:Intro}
A wireless ad-hoc  network is a self-organized decentralized network that consists of independent radio transceivers (transmitter/receiver) 
and does not rely on any existing infrastructure. The network nodes (stations) communicate over radio channels. 
Each node broadcasts a signal over a fixed range and any node within this transmission range receives the signal.
Communication with nodes outside the transmission range is done using multi-hops, i.e., intermediate nodes pass the message forward 
and form a communication path from the source node to the desired target node.
The twenty-first century witnesses widespread deployment of wireless networks for professional and private
applications. The field of wireless communication continues to experience unprecedented market growth. 
For a comprehensive survey of this field see~\cite{pahlavan2005}.

Let $S$ be a set of points in the $d$-dimensional Euclidean space representing radio stations.
A \emph{range assignment} for $S$ is a function $\rho : S \rightarrow \mathds{R}^+$ that assigns each point a transmission range (radius). 
The cost of a range assignment, representing the power consumption of the network, is defined as 
$cost(\rho) =  \sum_{v \in S}(\rho(v))^\alpha$ for some real constant $\alpha \geq 1$, 
where $\alpha$ varies between $1$ and values higher than $6$, depending on different environmental factors~\cite{pahlavan2005}.

A range assignment $\rho$ induces a \emph{directed communication graph}
$G_{\rho}=(S,E_{\rho})$, where $E_{\rho}=\{(u,v):\rho(u) \geq |uv|\}$ and $|uv|$ denotes the Euclidean distance between $u$ and $v$. 
A range assignment $\rho$ is \emph{valid} if the induced (communication) graph $G_{\rho}$ is strongly connected.
For ease of presentation, throughout the paper we refer to the terms `assigning a range $|uv|$ to a point $u \in S$' 
and `adding a directed edge $(u,v)$' as equivalent.

We consider the $d$-D \textsc{Minimum Cost Range Assignment} (\MCRA) problem,
that takes as input a set $S$ of $n$ points in $\mathds{R}^d$,
and whose objective is finding a valid \emph{range assignment} for $S$
of minimum cost.
This problem has been considered extensively in various settings, for different values of $d$ and $\alpha$, 
with additional requirements and modifications.
Some of these works are mentioned in this section.

In \cite{Kirousis}, Kirousis et al. considered the $1$D \MCRA\ problem (the radio stations are placed
arbitrarily on a line) and showed an $O(n^4)$-time algorithm which computes an optimal solution for the problem.
Later, Das et al.~\cite{DasGN07} improved the running time to $O(n^3)$. Here, we propose an $O(n^2)$-time exact algorithm, 
this improves the running time of the best known algorithm by a factor of $n$ without increasing the space complexity. 
The novelty of our method lies in separating the range assignment into two, \emph{left} and \emph{right}, range assignments
(elaborated in Section~\ref{sec:Linear}). 
This counter intuitive approach allows us to achieve the aforementioned result and, moreover,
to compute an optimal range assignment in $1$D with 
the additional requirement that the induced graph is a $t$-spanner, for a given $t \geq 1$.

A directed graph $G=(S,E)$ is a $t$-spanner for a set $S$, if for every two points $u,v \in S$ 
there exists a path in $G$ from $u$ to $v$ of length at most $t|uv|$.
The importance of avoiding flooding the network when routing, 
was one of the reasons that led researchers to consider the combination of range assignment and $t$-spanners, 
e.g.,~\cite{KarimRCK09,Shpungin,WYLX02,WangLi}, 
as well as the combination of range assignment and hop-spanners, e.g.,~\cite{Clementi00,Kirousis}.
While bounded-hop spanners bound the number of intermediate nodes forwarding a message, $t$-spanners bound the relative distance a message is forward. 
For the $1$D bounded-hop range assignment problem, Clementi et al.~\cite{Clementi00} showed a 2-approximation algorithm 
whose running time is $O(hn^3)$.
To the best of our knowledge, we are the first to show an algorithm that computes an optimal solution for the 
range assignment with the additional requirement that the induced graph is a $t$-spanner.  

While the $1$D version of the \MCRA\ problem can be solved optimally,
for any $d \geq 2$ and $\alpha \geq 1$, it has been proven to be $NP$-hard 
(in~\cite{Kirousis} for $d\geq3$ and $1 \leq \alpha < 2$ and later in~\cite{Clementi} for $d \geq 2$ and $\alpha > 1$). 
However, some versions can be approximated within a constant factor. 
For $\alpha = 2$ and any $d \geq 2$
Kirousis et al.~\cite{Kirousis} gave a 2-approximation algorithm based on the minimum spanning tree 
(although they addressed the case of $d\in \{2,3\}$ their result holds for any $d \geq 2$). 
The best known approximation ratio is for the case $\alpha=1$, where the approximation ratio is $1.5$~\cite{Ambuhl05}.
We show a new approximation algorithm for this case\footnote{Values of $\alpha$ smaller than $2$ correspond to areas, 
such as, corridors and large open indoor areas~\cite{pahlavan2005}.} with improved approximation ratio of $1.5-\epsilon$, for a suitable constant $\epsilon>0$.
We do not focus on increasing $\epsilon$ but rather on showing that there exists an approximation ratio for this problem 
that is strictly less than $1.5$. This is in contrast to classic problems, such as metric TSP and strongly connected sub-graph problems, 
for which the $1.5$ ratio bound has not yet been breached.



\section{Minimum Cost Range Assignment in 1D}\label{sec:Linear}
In the $1$D version of the \MCRA\ problem, the input set $S=\{v_1,...,v_n\}$ consists of points located on a line. 
For simplicity, we assume that the line is horizontal and for every $i<j$, $v_i$ is to the left of $v_j$.
Given two indices $1 \leq i < j \leq n$, we denote by $S_{i,j}$ the subset $\{v_i,...,v_j\}\subseteq S$.

We present two polynomial-time algorithms for finding optimal range assignments, 
the first, in Section~\ref{sec:LinMinCostRange}, for the basic $1$D \MCRA \ problem,
and the second, in Section~\ref{subsec:spanner}, subject to the additional requirement 
that the induced graph is a $t$-spanner (the $1$D \MCRAS \ problem).
Our new approach for solving these problems requires introducing a variant of the \emph{range assignment}.
Instead of assigning each point in $S$ a radius,
we assign each point two directional ranges, 
\emph{left range assignment}, $\rho^l : S \rightarrow \mathds{R}^+$,
and \emph{right range assignment}, $\rho^r : S \rightarrow \mathds{R}^+$.
A pair of assignments $(\rho^l,\rho^r)$ is called a \emph{left-right assignment}.
Assigning a point $v \in S$ a left range $\rho^l(v)$ and a right range $\rho^r(v)$
implies that in the induced graph, $G_{\rho^{lr}}$, $v$ can reach every point to its left up to distance $\rho^l(v)$
and every point to its right up to distance $\rho^r(v)$.
That is, $G_{\rho^{lr}}$,
contains the directed edge $(v_i,v_j)$ iff one of the following holds:
(i) $i<j$ and $|v_iv_j| \leq \rho^r(v_i)$, or \ 
(ii) $j<i$ and $|v_iv_j| \leq \rho^l(v_i)$.
The cost of an assignment $(\rho^l,\rho^r)$, is defined as
$cost(\rho^l,\rho^r) =  \sum_{v \in S}(\max\{\rho^l(v),\rho^r(v)\})^\alpha$.

Our algorithms find a \emph{left-right assignment} of minimum cost
that can be converted into a \emph{range assignment} $\rho$ with the same cost
by assigning each point $v \in S$ a range $\rho(v)=\max\{\rho^l(v),\rho^r(v)\}$.
Note that any valid \emph{range assignment} for $S$ can be converted to a a \emph{left-right assignment}
with the same cost, by assigning every point $v \in S$, $\rho^l(v)=\rho^r(v)=\rho(v)$.
To be more precise, either $\rho^l(v)$ or $\rho^r(v)$ should be reduced to $|vu|$, 
where $u$ is the farthest point in the directional range 
(for Lemma~\ref{lem:range_ij} to hold).
Therefore, a minimum cost \emph{left-right assignment}, implies a minimum cost \emph{range assignment}.

In addition to the $cost$ function, we define 
$cost'(\rho^l,\rho^r) =  \sum_{v \in S}((\rho^l(v))^\alpha+(\rho^r(v))^\alpha),$
and refine the term of \emph{optimal solution} to include
only solutions that minimize $cost'(\rho^l,\rho^r)$
among all solutions, $(\rho^l,\rho^r)$, with minimum $cost(\rho^l,\rho^r)$.


\subsection{An Optimal Algorithm for the 1D \MCRA \ Problem}\label{sec:LinMinCostRange}
Das et al.~\cite{DasGN07} state three basic lemmas regarding properties of an optimal range assignment.
The following three lemmas are adjusted versions of these lemmas for a \emph{left-right assignment}.
\begin{lemma}\label{lem:range_ij}
In an optimal solution $(\rho^l, \rho^r)$ 
for every $v_i \in S$,
either  $\rho^l(v_i)=0$ or $\rho^l(v_i)=|v_i v_j|$ 
and similarly, either $\rho^r(v_i)=0$ or $\rho^r(v_i)=|v_i v_k|$ for some $j \leq i \leq k$.
\end{lemma}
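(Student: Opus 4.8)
The plan is to argue by a straightforward exchange/minimality argument, treating the left and right ranges symmetrically. Fix an optimal left-right assignment $(\rho^l,\rho^r)$ and consider an arbitrary point $v_i\in S$; I will show the claim for $\rho^r(v_i)$, the argument for $\rho^l(v_i)$ being the mirror image. If $\rho^r(v_i)=0$ we are done, so assume $\rho^r(v_i)>0$. Let $v_k$ be the farthest point to the right of $v_i$ that lies within the right range of $v_i$, i.e.\ $k$ is the largest index with $k>i$ and $|v_iv_k|\le\rho^r(v_i)$ (such a $k$ exists since $\rho^r(v_i)>0$ and, by validity, $v_i$ must reach at least one point to its right unless $i=n$; the case $i=n$, where $\rho^r(v_i)$ can be taken to be $0$, is handled separately). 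I claim $\rho^r(v_i)=|v_iv_k|$.

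The key step is the following replacement: define a new right assignment $\tilde\rho^r$ that agrees with $\rho^r$ everywhere except that $\tilde\rho^r(v_i)=|v_iv_k|\le\rho^r(v_i)$, and keep $\tilde\rho^l=\rho^l$. First observe that the induced communication graph is unchanged: by the definition of the induced graph on the line, the out-edges of $v_i$ to the right are exactly the edges $(v_i,v_j)$ with $i<j$ and $|v_iv_j|\le\rho^r(v_i)$, and since $v_k$ was the farthest such point, every such $j$ satisfies $|v_iv_j|\le|v_iv_k|=\tilde\rho^r(v_i)$; no other vertex's ranges change, so $G_{\tilde\rho^{lr}}=G_{\rho^{lr}}$ and in particular $(\tilde\rho^l,\tilde\rho^r)$ is still valid. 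Next compare costs. Since $\tilde\rho^r(v_i)\le\rho^r(v_i)$ and all other values are unchanged, $\max\{\tilde\rho^l(v_i),\tilde\rho^r(v_i)\}\le\max\{\rho^l(v_i),\rho^r(v_i)\}$, so $cost(\tilde\rho^l,\tilde\rho^r)\le cost(\rho^l,\rho^r)$; by optimality of $(\rho^l,\rho^r)$ this must be an equality, so $(\tilde\rho^l,\tilde\rho^r)$ also has minimum $cost$. Then I invoke the refined notion of optimal solution: among all minimum-$cost$ solutions, $(\rho^l,\rho^r)$ minimizes $cost'$. But $(\rho^r(v_i))^\alpha\ge(\tilde\rho^r(v_i))^\alpha=(|v_iv_k|)^\alpha$ with equality iff $\rho^r(v_i)=|v_iv_k|$, and all other terms of $cost'$ are identical, so $cost'(\tilde\rho^l,\tilde\rho^r)\le cost'(\rho^l,\rho^r)$; minimality of $cost'$ forces equality, hence $\rho^r(v_i)=|v_iv_k|$, as desired.

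The main (minor) obstacle is bookkeeping of degenerate and boundary cases rather than any real difficulty: one must check that the extreme points $v_1$ and $v_n$ cause no problem (for $v_n$ there is no point to its right, so if $\rho^r(v_n)>0$ the replacement sets it to $0$, which is the first alternative in the statement; symmetrically for $\rho^l(v_1)$), and one must be slightly careful that reducing $\rho^r(v_i)$ never destroys strong connectivity — which is exactly what the "graph is unchanged" observation guarantees. It is also worth noting explicitly why the refined optimality criterion is needed: without the $cost'$ tie-breaker, an optimal-$cost$ solution could in principle have $\rho^r(v_i)$ strictly larger than $|v_iv_k|$ yet still no larger than $\rho^l(v_i)$, so the $cost$ term at $v_i$ would not detect the waste; the $cost'$ term does. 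Putting the right-range argument together with its left-range mirror, and ranging over all $i$, yields the lemma with the stated indices $j\le i\le k$.
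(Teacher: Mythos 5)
Your proof is correct and is the natural exchange argument; the paper itself omits a proof of this lemma (it is stated as an adjusted version of a lemma of Das et al.), and your use of the refined optimality criterion --- the $cost'$ tie-breaker among minimum-$cost$ solutions --- is exactly the mechanism the paper introduces to make this lemma hold, so you have identified the right tool. One small inaccuracy: your justification that the farthest reached point $v_k$ exists (``by validity, $v_i$ must reach at least one point to its right unless $i=n$'') is not valid, since strong connectivity does not force every non-rightmost point to have an out-edge to the right; if $0<\rho^r(v_i)<|v_iv_{i+1}|$ then no such $k$ exists. This does not break the proof, because in that case the identical replacement with $\tilde\rho^r(v_i)=0$ leaves the graph unchanged and strictly decreases $cost'$, landing in the first alternative of the statement --- you should just state that case explicitly rather than appeal to validity.
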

\begin{lemma}\label{lem:range_line}
Given three indices $1\leq i < j < k \leq n$, consider an optimal solution for $S_{i,k}$, 
denoted by $(\rho^l, \rho^r)$, subject to the condition that $\rho^l(v_j)\geq |v_i v_j|$ and $\rho^r(v_j) \geq |v_j v_k|$,
then, 
\begin{itemize}
	\item  for all $m=i,...,j-1$, $\rho^r(v_m)=|v_m v_{m+1}|$ and $\rho^l(v_m)=0$; and
   \item for all $m=j+1,...,k$, $\rho^l(v_m)=|v_m v_{m-1}|$ and $\rho^r(v_m)=0$.
\end{itemize}
%
\end{lemma}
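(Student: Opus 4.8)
The plan is to pin the optimal solution down by sandwiching both $cost$ and $cost'$. Write $L=\{v_i,\dots,v_{j-1}\}$ and $R=\{v_{j+1},\dots,v_k\}$ for the points strictly left and right of $v_j$. First observe that the hypothesis $\rho^l(v_j)\ge|v_iv_j|$ and $\rho^r(v_j)\ge|v_jv_k|$ already forces $v_j$ to have a directed edge to \emph{every} other point of $S_{i,k}$, so the restriction of the induced graph to $S_{i,k}$ is strongly connected iff every point of $S_{i,k}$ can reach $v_j$. Next I exhibit the \emph{canonical} left--right assignment $\hat\rho$ on $S_{i,k}$: set $\hat\rho^l(v_j)=|v_iv_j|$, $\hat\rho^r(v_j)=|v_jv_k|$; for $v_m\in L$ set $\hat\rho^r(v_m)=|v_mv_{m+1}|$ and $\hat\rho^l(v_m)=0$; for $v_m\in R$ set $\hat\rho^l(v_m)=|v_{m-1}v_m|$ and $\hat\rho^r(v_m)=0$. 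This assignment satisfies the hypothesis, is valid (the points of $L$ form a left-to-right chain into $v_j$ and those of $R$ a right-to-left chain), and has $cost(\hat\rho)=C$ and $cost'(\hat\rho)=C+\min\{|v_iv_j|,|v_jv_k|\}^\alpha$, where $C:=\sum_{m=i}^{j-1}|v_mv_{m+1}|^\alpha+\max\{|v_iv_j|,|v_jv_k|\}^\alpha+\sum_{m=j+1}^{k}|v_{m-1}v_m|^\alpha$.

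The crux is the matching lower bound: every valid left--right assignment $\rho$ on $S_{i,k}$ obeying the hypothesis has $cost(\rho)\ge C$. I would bound the left, middle, and right contributions separately. For the middle, $\max\{\rho^l(v_j),\rho^r(v_j)\}^\alpha\ge\max\{|v_iv_j|,|v_jv_k|\}^\alpha$ by the hypothesis. For the left, for each index $m\in\{i,\dots,j-1\}$ the cut $\{v_i,\dots,v_m\}$ versus the rest of $S_{i,k}$ must be crossed from left to right, so some $v_a$ with $a\le m$ (hence $v_a\in L$, since also $a\ge i$) has $\rho^r(v_a)\ge|v_av_{m+1}|$; thus the gap $(v_m,v_{m+1})$ is covered rightward by a point of $L$. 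Each $v_a\in L$ covers a consecutive block of gaps, say those with indices $a,\dots,a+t-1$, whence $\rho^r(v_a)\ge\sum_{s=a}^{a+t-1}|v_sv_{s+1}|$, and since $(x+y)^\alpha\ge x^\alpha+y^\alpha$ for $x,y\ge0$ and $\alpha\ge1$ we get $(\rho^r(v_a))^\alpha\ge\sum_{s=a}^{a+t-1}|v_sv_{s+1}|^\alpha$; summing over $v_a\in L$ and using that every left gap is covered at least once gives $\sum_{v_m\in L}\max\{\rho^l(v_m),\rho^r(v_m)\}^\alpha\ge\sum_{v_m\in L}(\rho^r(v_m))^\alpha\ge\sum_{m=i}^{j-1}|v_mv_{m+1}|^\alpha$. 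The right contribution is bounded symmetrically with $\rho^l$ in place of $\rho^r$, and adding the three bounds yields $cost(\rho)\ge C$. This superadditivity/covering step -- in particular checking that each left gap is covered by a point of $L$ and not merely by some point with a smaller index -- is the main obstacle.

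Finally I would read off the structure from tightness. Since $\hat\rho$ is feasible, the constrained optimal solution $(\rho^l,\rho^r)$ has $cost(\rho^l,\rho^r)=C$, so every inequality above is an equality; in particular $\max\{\rho^l(v_m),\rho^r(v_m)\}=\rho^r(v_m)$ for $v_m\in L$ and $=\rho^l(v_m)$ for $v_m\in R$, and $\sum_{v_m\in L}(\rho^r(v_m))^\alpha=\sum_{m=i}^{j-1}|v_mv_{m+1}|^\alpha$. Using the identity $cost'(\rho^l,\rho^r)=cost(\rho^l,\rho^r)+\sum_{v_m\in S_{i,k}}\min\{\rho^l(v_m),\rho^r(v_m)\}^\alpha$ together with the hypothesis gives $cost'(\rho^l,\rho^r)\ge C+\min\{|v_iv_j|,|v_jv_k|\}^\alpha=cost'(\hat\rho)$, and since the optimal solution also minimizes $cost'$ among minimum-$cost$ solutions, equality holds, forcing $\min\{\rho^l(v_m),\rho^r(v_m)\}=0$ for every $m\ne j$. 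Combined with the $\max$-equalities this yields $\rho^l(v_m)=0$ for $v_m\in L$ and $\rho^r(v_m)=0$ for $v_m\in R$; and since $v_m\in L$ must still reach $v_j$ while $\rho^l(v_m)=0$, necessarily $\rho^r(v_m)>0$, whence $\rho^r(v_m)\ge|v_mv_{m+1}|$ by Lemma~\ref{lem:range_ij}, and the equality $\sum_{v_m\in L}(\rho^r(v_m))^\alpha=\sum_{m=i}^{j-1}|v_mv_{m+1}|^\alpha$ then forces $\rho^r(v_m)=|v_mv_{m+1}|$ for each $v_m\in L$. The case $v_m\in R$ is symmetric, giving $\rho^l(v_m)=|v_{m-1}v_m|$ and $\rho^r(v_m)=0$. (For $v_j$ itself, Lemma~\ref{lem:range_ij} and the hypothesis directly give $\rho^l(v_j)=|v_iv_j|$ and $\rho^r(v_j)=|v_jv_k|$, though this is not part of the statement.)
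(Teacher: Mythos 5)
Your proof is correct and complete. For comparison: the paper does not actually prove this lemma --- it is presented, together with Lemmas~\ref{lem:range_ij} and~\ref{lem:range_12}, as an ``adjusted version'' of a lemma of Das et al.~\cite{DasGN07}, with no argument supplied for the left--right setting, so your write-up is a genuine addition rather than a reproduction. Its structure --- exhibit the canonical chain assignment $\hat\rho$ as a feasible witness; prove the matching lower bound $cost(\rho)\ge C$ by a cut-crossing/covering argument combined with the superadditivity $(x+y)^\alpha\ge x^\alpha+y^\alpha$ for $\alpha\ge 1$; then extract the exact ranges from tightness, first in $cost$ and then in $cost'$ via the paper's refined notion of optimality --- is a global counting argument, in contrast to the local exchange arguments (shrink or redirect one range at a time and verify connectivity) typically used for such structural lemmas. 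The global route has the side benefit of certifying the optimal value $C$ of the constrained subproblem in the same pass, which is exactly the quantity the dynamic program of Lemma~\ref{lem:OPT(i)} consumes. Two points you handle correctly but should make explicit in a polished version: (i) the left-to-right crossing of each cut $\{v_i,\dots,v_m\}$ with $m\le j-1$ is necessarily realized by a point of $L$ (its index lies in $[i,m]\subseteq[i,j-1]$), so the left gaps really are paid for by $\sum_{v_m\in L}(\rho^r(v_m))^\alpha$ alone and the three contributions to $C$ are disjoint; and (ii) the appeal to $cost'$ is essential, since minimality of $cost$ alone only yields $\rho^l(v_m)\le\rho^r(v_m)$ on $L$ rather than $\rho^l(v_m)=0$.
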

%
%
\begin{lemma}\label{lem:range_12}
In an optimal solution $(\rho^l, \rho^r)$,
$\rho^l(v_1)=0$ and $\rho^r(v_1)=|v_1v_2|$.
\end{lemma}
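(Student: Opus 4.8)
The plan is to establish the two assertions in turn. The first, $\rho^l(v_1)=0$, is essentially free: since $v_1$ is the leftmost point there is nothing to its left, so the value of $\rho^l(v_1)$ has no effect on the edge set of $G_{\rho^{lr}}$. Hence replacing $\rho^l(v_1)$ by $0$ keeps the assignment valid and does not increase $cost$, while strictly decreasing $cost'$ unless $\rho^l(v_1)$ was already $0$. By the refined notion of optimality (minimum $cost'$ among minimum-$cost$ solutions), $\rho^l(v_1)=0$.

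For $\rho^r(v_1)=|v_1v_2|$ I would argue by contradiction. Strong connectivity forces $v_1$ to have an outgoing edge, which must point to the right because $\rho^l(v_1)=0$; combined with Lemma~\ref{lem:range_ij} this gives $\rho^r(v_1)=|v_1v_k|$ for some $k\ge 2$, and I assume $k\ge 3$. Define a new assignment by lowering $\rho^r(v_1)$ to $|v_1v_2|$ and raising $\rho^r(v_2)$ to $\max\{\rho^r(v_2),|v_2v_k|\}$, leaving all other values unchanged. The only edges destroyed are $(v_1,v_j)$ for $3\le j\le k$, and every use of such an edge in a simple path can be replaced by the detour $v_1\to v_2\to v_j$, whose second edge exists because the new right range of $v_2$ is at least $|v_2v_k|\ge|v_2v_j|$; thus the modified assignment is still valid. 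Writing $a=|v_1v_2|$ and $b=|v_2v_k|$, so that $|v_1v_k|=a+b$ with $a,b>0$, the contribution of $v_1$ drops from $(a+b)^\alpha$ to $a^\alpha$ (here we use $\rho^l(v_1)=0$), while the contribution of $v_2$ grows by at most $b^\alpha$; hence $cost$ changes by at most $a^\alpha+b^\alpha-(a+b)^\alpha$, which is $\le 0$ by superadditivity of $x\mapsto x^\alpha$ for $\alpha\ge 1$. For $\alpha>1$ this is strictly negative, already contradicting the optimality of $(\rho^l,\rho^r)$, so $k=2$.

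The delicate case, and the main obstacle, is $\alpha=1$, where the estimate above is only tight (as $a+b-(a+b)=0$) and one must look more closely. Here $\rho^l(v_2)\in\{0,|v_1v_2|\}$ by Lemma~\ref{lem:range_ij} (only $v_1$ lies to the left of $v_2$), and $v_2$ must have at least one outgoing edge, so $(\rho^l(v_2),\rho^r(v_2))\ne(0,0)$. Going through the few resulting configurations, I would check that the increase in the $v_2$-term caused by raising its right range to $|v_2v_k|$ is always strictly smaller than the amount $b$ saved at $v_1$, so that $cost$ strictly decreases, again contradicting optimality; and in any residual configuration where $cost$ merely stays equal, the same exchange strictly decreases $cost'$, contradicting the refined optimality instead. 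Either way $k=2$, i.e.\ $\rho^r(v_1)=|v_1v_2|$, which completes the proof.
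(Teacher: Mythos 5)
Your proof is correct. There is nothing in the paper to compare it against: the paper states this lemma (together with Lemmas~\ref{lem:range_ij} and~\ref{lem:range_line}) as an ``adjusted version'' of a lemma of Das et al.~\cite{DasGN07} and supplies no proof, so your exchange argument stands on its own. The first half is exactly the intended use of the refined optimality notion: lowering $\rho^l(v_1)$ to $0$ cannot destroy edges or raise $cost$, and strictly lowers $cost'$ otherwise. For the second half, the modification (shrink $\rho^r(v_1)$ to $|v_1v_2|$, enlarge $\rho^r(v_2)$ to $\max\{\rho^r(v_2),|v_2v_k|\}$) does preserve strong connectivity, and the accounting via superadditivity of $x\mapsto x^\alpha$ is right; the only ingredient worth stating explicitly is collinearity, i.e.\ $|v_1v_k|=|v_1v_2|+|v_2v_k|$, which is what makes the bound $a^\alpha+b^\alpha\le(a+b)^\alpha$ applicable. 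The one place you defer detail --- ``going through the few resulting configurations'' when $\alpha=1$ --- is actually immediate from the observation you already made: the increase at $v_2$ is $\max\{M,b\}-M$ with $M=\max\{\rho^l(v_2),\rho^r(v_2)\}$, and this equals the saving $b$ at $v_1$ only when $M=0$, which strong connectivity excludes. So $cost$ strictly decreases for every $\alpha\ge 1$, the case split on $\alpha$ is unnecessary, and the fallback to $cost'$ is never needed in the second half. With that one simplification the argument is complete.
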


Lemma~\ref{lem:range_ij} allows us to simplify the notation 
$\rho^x(v_i)=|v_i v_j|$ for $x \in \{l,r\}$ and $1 \leq i,j \leq n$,  
and write $\rho^x(i)=j$ for short.
We solve the \MCRA\ problem using dynamic programming.
Given $1 \leq i <n$, we denote by $OPT(i)$ the cost of an optimal solution for the sub-problem
defined by the input $S_{i,n}$, subject to the condition that $\rho^r(i)=i+1$.
Note that the cost of an optimal solution for the whole problem is $OPT(1)$.

In Section~\ref{subsec:cubic} we present an algorithm with $O(n^3)$ running time and $O(n^2)$ space
(the same time and space as in~\cite{DasGN07}).
Then, in Section~\ref{subsec:Quad} we 
reduce the running time to $O(n^2)$.


\subsubsection{A Cubic-Time Algorithm 
}\label{subsec:cubic}
Algorithm~\AlgMCRA\ (Algorithm~\ref{alg}) applies dynamic programming to compute the values $OPT(i)$
for every $1\leq i \leq n$ and store them in a table, $T$.
Finally, it outputs the value $T[1]$.
In our computation we use a $2$-dimensional matrix, \emph{Sum}, storing 
for every $1\leq i<j \leq n$ the sum $\sum_{m=i}^{j-1} |v_m v_{m+1}|^\alpha$.
\begin{algorithm}[htp]
\caption{\AlgMCRA($S$)}\label{alg}
\begin{algorithmic}
	\FOR {$i = n-1$ \textbf{downto} $1$}
			\FOR {$j = n$  \textbf{downto}  $i+1$}
				\STATE	
				\[
							\text{Sum}[i,j] \gets 
										\left\{
											\begin{array}{ll}  
								        						 |v_{n-1} v_n|^\alpha &\:  ,i=n-1 \\
								        				\text{Sum} [i+1,n]+|v_i v_{i+1}|^\alpha &\:  ,j=n \\
								        				\text{Sum} [i,j+1]-|v_j v_{j+1}|^\alpha &\:  ,\text{otherwise}	
											\end{array}
										\right.
					\]
				\ENDFOR
		\ENDFOR	
		\FOR {$i = n-1$ \textbf{downto} $1$}
		\STATE
				\[
							\text{T}[i] \gets  
										\left\{
											\begin{array}{cl}  
								        				\text 2|v_i v_{i+1}|^\alpha &\:  ,i=n-1 \\
								        				\begin{aligned}
	        												\min_{ \substack{i < k < n \\ k <  k' \leq n}} 
																	\{&  \text{Sum}[i,k'-1]  + \text{T}[k'-1]  - |v_{k'-1} v_{k'}|^\alpha\\[-2mm]
																 	&  + \max\{ |v_i v_k|^\alpha, |v_k v_{k'}|^\alpha \}\}
															 \end{aligned}  &\: ,otherwise 
											\end{array}
										\right.
					\]
	\ENDFOR
	\RETURN $\text{T}[1]$
\end{algorithmic}
\end{algorithm}
While the table $T$ maintains only the costs of the solutions,
the optimal assignment can be easily retrieved by 
backtracking the cells leaded to the optimal cost and assigning the associated ranges (described in the proof of Lemma~\ref{lem:OPT(i)}).\\

\paragraph*{Correctness.}
%
We prove that for every $1 \leq i \leq n$, the value assigned to cell $T[i]$ by the algorithm 
equals $OPT(i)$.
Trivially, $OPT(n-1)$ indeed equals $2|v_i v_{i+1}|^\alpha$.
Assume, during the $i$-th iteration it holds that $T[i']=OPT(i')$ for every $i<i'<n$,
the correctness of the computation done during the $i$-th iteration is given in Lemma~\ref{lem:OPT(i)}.
\begin{lemma}\label{lem:OPT(i)}
Given an index $i$ with $1 \leq i \leq n-1$,	
\[ OPT(i) = \min_{ \substack{i < k < n \\ k <  k' \leq n}}  
						\left\{	\sum_{m=i}^{k'-2}  |v_m v_{m+1}|^\alpha  + OPT(k'-1)  
								 - |v_{k'-1} v_{k'}|^\alpha + \max\{ |v_i v_k|^\alpha, |v_k v_{k'}|^\alpha \}	\right\}.		\]			
\end{lemma}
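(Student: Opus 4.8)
The plan is to establish the recurrence for $OPT(i)$ by a two-way inclusion, exploiting Lemmas~\ref{lem:range_ij}--\ref{lem:range_12} to pin down the shape of an optimal left-right assignment for $S_{i,n}$ subject to $\rho^r(i)=i+1$. First I would analyze an optimal solution $(\rho^l,\rho^r)$ for this sub-problem and locate the point $v_k$ that is "reached from the left by $v_i$", i.e.\ the largest index such that there is a path from $v_i$ to $v_k$ using only rightward edges among $v_i,\dots,v_k$; equivalently (after applying Lemma~\ref{lem:range_line} with $v_i$ playing the role of the middle vertex) $v_k$ is the farthest point such that $\rho^r(v_m)=|v_mv_{m+1}|$ for all $i\le m<k$. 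Since $\rho^r(i)=i+1$ we have $k>i$, and since the whole graph on $S_{i,n}$ must be strongly connected but $v_n$ still needs to be reached, $k<n$, so $i<k<n$ as in the statement. Some vertex must "jump back" to close the cycle; let $v_{k'}$ be the target of the rightmost such long edge emanating from $v_k$ — more carefully, let $k'$ be chosen so that $\rho^r(v_k)=|v_kv_{k'}|$ with $k<k'\le n$, which exists because $v_k$ is the last vertex reachable by a chain of unit steps, so it must carry a longer edge to eventually reach $v_n$.

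With $k$ and $k'$ fixed, the plan is to decompose the cost. The vertices $v_i,\dots,v_{k-1}$ each contribute exactly $|v_mv_{m+1}|^\alpha$ (their right range is one step, their left range is $0$ by Lemma~\ref{lem:range_line}), giving the partial sum $\sum_{m=i}^{k-1}|v_mv_{m+1}|^\alpha$. The vertex $v_k$ has $\rho^r(v_k)=|v_kv_{k'}|$ and $\rho^l(v_k)=|v_iv_k|$ (it must reach back to $v_i$, and by Lemma~\ref{lem:range_ij} and the minimality built into the refined notion of optimal solution this left range is exactly $|v_iv_k|$), so $v_k$ contributes $\max\{|v_iv_k|^\alpha,|v_kv_{k'}|^\alpha\}$. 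The remaining vertices $v_{k},\dots,v_{k'-1}$ together with the edge structure restricted to $S_{k'-1,n}$: I would argue the induced solution on $S_{k'-1,n}$ is itself an optimal solution for that sub-problem with $\rho^r(k'-1)=k'$ (the edge $(v_{k'-1},v_{k'})$ is present because $v_k$'s backward-return and the strong connectivity force a rightward unit chain from $v_k$ up to $v_{k'-1}$, or one can re-derive it via Lemma~\ref{lem:range_line} applied inside $S_{k,n}$). This contributes $OPT(k'-1)$, but that quantity already counts the term $|v_{k'-1}v_{k'}|^\alpha$ for vertex $v_{k'-1}$'s right range, while in our global solution the contribution of $v_{k'-1}$ might instead be subsumed — the bookkeeping is that the vertices $v_k,\dots,v_{k'-2}$ have their right-unit-step cost already inside $\sum_{m=i}^{k'-2}$, so we must subtract the double-counted $|v_{k'-1}v_{k'}|^\alpha$; this is exactly the $-|v_{k'-1}v_{k'}|^\alpha$ correction term. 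Summing, $cost(\rho^l,\rho^r)\ge$ the right-hand side with this particular $(k,k')$, hence $OPT(i)$ is at least the minimum over all valid $(k,k')$.

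For the reverse inequality I would, given any $i<k<n<k'\le n$ — sorry, $k<k'\le n$ — construct an explicit left-right assignment on $S_{i,n}$ achieving the claimed value: take an optimal solution realizing $OPT(k'-1)$ on $S_{k'-1,n}$, set $\rho^r(v_m)=|v_mv_{m+1}|$ and $\rho^l(v_m)=0$ for $i\le m<k$ and for $k\le m<k'-1$, set $\rho^l(v_k)=|v_iv_k|$ and $\rho^r(v_k)=|v_kv_{k'}|$, and verify strong connectivity of the resulting graph on $S_{i,n}$: forward reachability from $v_i$ up to $v_{k'-1}$ via unit steps then into the optimal sub-solution, the long edge $(v_k,v_{k'})$ plus the sub-solution handle the rest, and backward reachability funnels through $v_k\to v_i$ together with the unit backward steps and the sub-solution's connectivity. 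The cost of this construction is precisely the summand, establishing $OPT(i)\le$ the minimum, and combining the two directions gives the lemma.

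The main obstacle I anticipate is the careful case analysis in the \emph{lower bound} direction: showing that an arbitrary optimal solution really does have the "unit steps up to $v_k$, one long left edge and one long right edge at $v_k$, then a sub-solution on $S_{k'-1,n}$" structure, and in particular that the sub-solution restricted to $S_{k'-1,n}$ is optimal \emph{for that sub-problem with the $\rho^r(k'-1)=k'$ constraint} — this is where an exchange/cut-and-paste argument is needed, and one must be careful that cutting the solution at $v_{k'-1}$ does not destroy strong connectivity of either piece, which is why $v_k$'s long edge lands exactly at $v_{k'}$ and why the overlap correction term $-|v_{k'-1}v_{k'}|^\alpha$ appears. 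The subtlety about the refined notion of optimality (minimizing $cost'$ among minimizers of $cost$) is exactly what prevents degenerate ties in the choice of $k$ and $k'$ and should be invoked explicitly when asserting $\rho^l(v_k)=|v_iv_k|$ rather than something larger.
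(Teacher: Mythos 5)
Your overall skeleton matches the paper's proof: two inequalities, an explicit construction of a valid assignment for the upper bound, and a structural analysis of an arbitrary optimal solution for the lower bound, with the same cost bookkeeping (including the $-|v_{k'-1}v_{k'}|^\alpha$ correction). However, there is a genuine gap exactly where you flag "the main obstacle": you never supply the argument that the optimal solution \emph{decomposes} at $v_{k'-1},v_{k'}$ into two independent subproblems. This is the heart of the lower-bound direction, and it is not routine. The paper's argument is an explicit exchange: if some $v_t$ with $i<t<k'$ had $\rho^r(t)\geq k'$, one could reassign $\rho^r(k)=\max\{t,k\}$, preserving strong connectivity while strictly decreasing $cost'$ without increasing $cost$ --- contradicting the refined notion of optimality (minimum $cost'$ among minimum-$cost$ solutions). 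A symmetric argument for left-going edges then shows that for any $v_j$ to the right of $v_{k'}$ with $\rho^l(j)=j'\leq k'$, no edge leaves the open interval $(v_{j'},v_{k'})$, so that interval must be empty of vertices, i.e.\ $j'\in\{k'-1,k'\}$. Only after this can one legitimately split into the subproblem on $S_{i,k'-1}$ (whose optimal shape is then given by Lemma~\ref{lem:range_line}) and the subproblem on $S_{k'-1,n}$ with the free edge $\rho^r(k'-1)=k'$. Note also that the refined $cost'$-optimality is used precisely for this exchange step, not (as you suggest) to force $\rho^l(v_k)=|v_iv_k|$; the latter already follows from Lemma~\ref{lem:range_ij} together with $v_i$ being the leftmost point of $S_{i,n}$.

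A secondary issue is your choice of $v_k$ as "the farthest point reachable from $v_i$ by unit right steps." The paper instead defines $v_k$ as a point with $\rho^l(k)=i$, whose existence is immediate (some vertex must have an edge into the leftmost point $v_i$), and only afterwards derives the unit-step chain from Lemma~\ref{lem:range_line} applied to the left subproblem. Your two characterizations of $k$ do coincide in the final optimal structure, but proving that they coincide in an arbitrary optimal solution is essentially equivalent to the missing decomposition argument, so starting from your definition makes the existence of the long edges $\rho^l(k)=i$ and $\rho^r(k)=k'$ at the \emph{same} vertex something you would still have to justify. Your cost accounting for the vertices $v_{k+1},\dots,v_{k'-1}$ (which carry left unit steps, not right ones, contributing the terms indexed $k,\dots,k'-2$ of the sum) is also garbled, though repairable.
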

\begin{proof}
Let $X_i$ denote the right side of the equation, we prove $OPT(i) = X_i$. 
%
\begin{description}[topsep=2mm, itemsep=2mm, leftmargin=2.5mm]
\item [$OPT(i) \leq X_i$:]  \
We show that all costs that appear as $\min$ function arguments  in $X_i$
correspond to valid assignments and thus infer, by the optimality of $OPT(i)$, that the above inequality holds.
Consider an argument with parameters $k$ and $k'$. 
We associate it with an assignment $(\rho^l,\rho^r)$ defined as follows (see Fig.~\ref{fig:OPT(i)}(a)).
For $m \geq k'-1 $ the assignment is inductively defined by $OPT(k'-1)$.
For every $i \leq m < k$, $\rho^l(m)=m$ and $\rho^r(m)=m+1$,
for every $k < m < k'$, $\rho^l(m)=m-1$ and $\rho^r(m)=m$ ($v_{k'-1}$ is reassigned) and 
for $k$, $\rho^l(k)=i$, $\rho^r(k)=k'$.
By the validity of $OPT(k'-1)$, every two points among $S_{k'-1,n}$ are (strongly) connected.
Our assignment for $S_{i,k'-1}$ guarantees the connectivity between every two points
in $S_{i,k'-1}$, and thus between every two points in $S_{i,n}$.
\item[$OPT(i) \geq X_i$:] \
Consider an optimal solution $(\rho^l,\rho^r)$ for the points $S_{i,n}$ 
subject to the condition that $\rho^r(i)=i+1$. 
Let $v_k$ be a point to the right of $v_i$ with $\rho^l(k)=i$ and let $\rho^r(k)=k'$.
Note that since $v_i$ is the leftmost point and the induced graph
is strongly connected, such a point necessarily exists.

Next we show that there is no edge directed either left or right 
connecting two points on different sides of $v_{k'}$ in $G_{\rho^{lr}}$, except for possibly an edge $(v_j, v_{k'-1})$ with $j>k'$.
Assume towards contradiction that the former does not hold, i.e., there exists $i<t<k'$, with $\rho^r(t)\geq k'$;
then, reassigning $\rho^r(k)= \max \{t,k\}$ maintains the connectivity,
and reduces the value of $cost'$ without increasing the value of $cost$
in contradiction to the optimality of the solution.
Now, let $v_j$ be a point to the right of $v_{k'}$ with $\rho^l(j)=j'\in \left[ i,k' \right]$,
we show that $j' \geq k'-1$.
Consider a point $j'<t<k'$, as we have shown, $\rho^r(t)<k'$.
By symmetric arguments we have $\rho^l(t)> j'$  (see Fig.~\ref{fig:OPT(i)}(b)). 
Namely, there is no edge going out of the interval $(v_{j'},v_{k'})$.
Thus, connectivity can be achieved only if this interval is empty of vertices, 
i.e., either $j'=k'-1$ or $j'=k'$ (note that $k'-1>i$).  
 
The above observation allows us to divide the problem into two independent subproblems,
one for the points $S_{i,k'-1}$ subject to the constraints $\rho^l(k)=i$ and $\rho^r(k)=k'$,
and the other for the points $S_{k'-1,n}$ subject to the artificial constraint $\rho^r(k'-1)=k'$ 
that guarantees the existence of a path from $k'-1$ to $k'$, due to the solution of the first subproblem,
but should not be paid for.
Regarding the first subproblem, by Lemma~\ref{lem:range_line}, in an optimal assignment,
for every $i \leq m < k$, $\rho^l(m)=m$ and $\rho^r(m)=m+1$, and
for every $k < m \leq k'-1$, $\rho^l(m)=m-1$ and $\rho^r(m)=m$.
Thus, its cost is $ \sum_{m=i}^{k'-2} |v_m v_{m+1}|^\alpha + \max \{|v_k v_i|^\alpha ,|v_k v_{k'}|^\alpha$\}.
The cost of an optimal solution to the second subproblem is $OPT(k'-1) - |v_{k'-1} v_k'|^\alpha$.
Hence, the $cost$ of an optimal solution to the whole problem is the sum of the above costs and the lemma follows.
\end{description}
\end{proof}
\begin{figure}[tbh]
    \centering
        \includegraphics[width=1\textwidth]{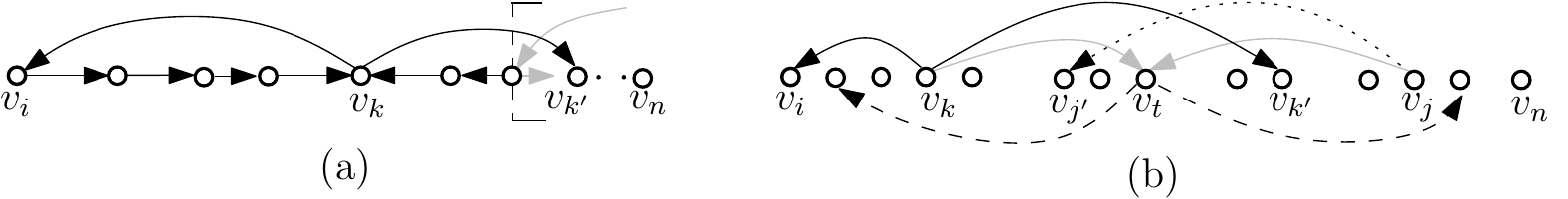}
    \caption{(a) An illustration of the assignment associated with $OPT(i)$ with respect to given 
    parameters $k$ and $k'$. In gray are range assignments associated with $OPT(k'-1)$. 
    (b) An illustration of the proof of Lemma~\ref{lem:OPT(i)}. 
    In dashed arrows, the impossible ranges of $v_t$ and in gray, the alternative assignment of lower $cost'$.}
    \label{fig:OPT(i)}
\end{figure}
%
%
\paragraph*{Complexity.}
%
Obviously, Algorithm~\AlgMCRA\ requires $O(n^2)$ space. 
Regarding the running time, $O(n)$ iterations are performed during the algorithm,
each iteration takes $O(n^2)$ time 
Therefore, the total running time 
is $O(n^3)$ and Lemma~\ref{lem:alg1DRA} follows.
\begin{lemma}\label{lem:alg1DRA}
Algorithm~\AlgMCRA\ runs in $O(n^3)$ time using $O(n^2)$ space. 
\end{lemma}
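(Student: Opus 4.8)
The plan is to verify the two resource bounds claimed in Lemma~\ref{lem:alg1DRA} by a direct inspection of Algorithm~\AlgMCRA, taking the correctness of the recurrence (Lemma~\ref{lem:OPT(i)}) for granted. For the space bound, I would observe that the only data structures allocated are the one‑dimensional table $T$ of $n$ cells and the two‑dimensional matrix $\text{Sum}$ of at most $n^2$ cells; the recurrence for $T[i]$ and for $\text{Sum}[i,j]$ references only previously filled cells, so nothing else needs to be stored, giving $O(n^2)$ space. (If one also wants to recover the assignment rather than just its cost, it suffices to store, alongside each $T[i]$, the minimizing pair $(k,k')$ — an extra $O(n)$ words — which does not change the bound.)

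For the running time, I would break the algorithm into its two phases. The first phase fills $\text{Sum}$: a double loop over $i$ from $n-1$ down to $1$ and $j$ from $n$ down to $i+1$, where each entry is computed in $O(1)$ from the stated three‑way case split (base case, the $j=n$ case that adds one edge term to a previously computed row sum, and the general case that subtracts one edge term from the neighbouring entry). There are $O(n^2)$ entries, so this phase costs $O(n^2)$. The second phase fills $T$: for each $i$ from $n-1$ down to $1$, the value $T[i]$ is obtained, in the non‑base case, by taking a minimum over all pairs $(k,k')$ with $i<k<k'\le n$. There are $O(n^2)$ such pairs, and — crucially — each argument of the $\min$ is evaluated in $O(1)$ time, because $\text{Sum}[i,k'-1]$ and $T[k'-1]$ are already tabulated and the remaining terms $|v_{k'-1}v_{k'}|^\alpha$ and $\max\{|v_iv_k|^\alpha,|v_kv_{k'}|^\alpha\}$ are computed directly. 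Hence each of the $O(n)$ iterations of the outer loop costs $O(n^2)$, for $O(n^3)$ total, and adding the $O(n^2)$ cost of the first phase leaves the bound at $O(n^3)$.

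The main point to be careful about — the only place where the argument is not completely mechanical — is confirming that the per‑argument work in the $\min$ really is constant, i.e.\ that every quantity appearing inside the braces has already been computed by the time it is needed. This follows from the loop order: the outer loop runs $i$ downward, so $T[k'-1]$ with $k'-1>i$ was set in an earlier iteration; and $\text{Sum}$ is filled in its entirety before the $T$‑phase begins, so $\text{Sum}[i,k'-1]$ is available. (Evaluating $\text{Sum}[i,i]$, which would arise only if $k'-1=i$, never happens since $k>i$ forces $k'-1\ge k>i$.) Once this dependency check is in place, multiplying $O(n)$ outer iterations by $O(n^2)$ pairs by $O(1)$ per pair gives the claim, and the space usage is read off directly from the declared tables, completing the proof of Lemma~\ref{lem:alg1DRA}.
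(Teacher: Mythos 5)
Your proposal is correct and matches the paper's (much terser) complexity argument: $O(n^2)$ space for the tables \emph{Sum} and $T$, and $O(n)$ outer iterations each costing $O(n^2)$ for the minimization over pairs $(k,k')$ with $O(1)$ work per pair. The dependency check you add (that all referenced cells are already filled) is a sensible elaboration but not a different route.
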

%

\subsubsection{A Quadratic-Time Algorithm}\label{subsec:Quad}
In this section we consider Algorithm~\AlgMCRA\ from previous section
and reduce its running time to $O(n^2)$.
Consider the equality stated in Lemma~\ref{lem:OPT(i)}.
Observe that given fixed values $i$ and $k'$, the value $k$ that minimizes the argument of the $\min$ function
with respect to $i$ and $k'$ is simply the value $k$ that minimizes $\max\{ |v_i v_k|^\alpha, |v_k v_{k'}|^\alpha\}$. 
This value is simply the closest point to the midpoint of the segment $\overline{v_i v_{k'}}$,
denoted by $c(i,k')$. Thus, 
\[ 	OPT(i)=		
	        				\min_{i+1 < k' \leq n} 
										\left\{
										\begin{aligned}
										 \sum_{m=i}^{k'-2}  |v_m v_{m+1}|^\alpha  &+ OPT(k'-1) - |v_{k'-1} v_{k'}|^\alpha \\
									 	& + \max\{ |v_i v_{c(i,k')}|^\alpha, |v_{c(i,k')} v_{k'}|^\alpha \}
									 	 \end{aligned} 
									 	\right\}.				 	
\]
Consider Algorithm~\AlgMCRA\ after applying the above modification 
in the computation of $T[i]$.
Since there are only $O(n)$ sub-problems to compute, each in $O(n)$ time,
the running time reduces to $O(n^2)$ and the following theorem follows.
\begin{theorem}
The $1$D \MCRA\ problem can be solved in $O(n^2)$ time using $O(n^2)$ space.
\end{theorem}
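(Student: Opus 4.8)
The plan is to build on the cubic-time algorithm \AlgMCRA\ and its correctness (Lemma~\ref{lem:OPT(i)}), and to shave a factor of $n$ by collapsing the inner minimization of the recurrence using one geometric observation, while keeping the precomputation within $O(n^2)$. First I would recall that, by the reduction set up at the start of Section~\ref{sec:Linear}, it suffices to compute a minimum-cost \emph{left-right assignment}, and that by Lemma~\ref{lem:range_12} the optimal cost of the whole instance equals $OPT(1)$; so the task reduces to computing the $n$ values $OPT(i)$ quickly.

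Next I would rewrite the recurrence of Lemma~\ref{lem:OPT(i)} by fixing $i$ and $k'$ and asking only for the best intermediate index $k$ with $i<k<k'$. The only summand in the bracket that depends on $k$ is $\max\{|v_iv_k|^\alpha,|v_kv_{k'}|^\alpha\}$, and since $t\mapsto t^\alpha$ is increasing for $\alpha\ge 1$, this is minimized by the point $v_k$ whose distance to the farther of $v_i,v_{k'}$ is smallest, i.e.\ the point closest to the midpoint of $\overline{v_iv_{k'}}$; denote it $c(i,k')$. Substituting $k=c(i,k')$ gives exactly the single-variable recurrence displayed just after Lemma~\ref{lem:OPT(i)}, and because the substituted term is a pointwise minimizer and the rest of the DP is untouched, this does not discard the true optimum.

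For the implementation, the matrix $\text{Sum}[i,j]=\sum_{m=i}^{j-1}|v_mv_{m+1}|^\alpha$ is filled in $O(n^2)$ time and space exactly as in \AlgMCRA. The table of indices $c(i,k')$ can also be produced in $O(n^2)$ total time: for each fixed $i$, as $k'$ increases the midpoint of $\overline{v_iv_{k'}}$ moves monotonically to the right, so $c(i,k')$ is maintained by a single forward scan over the $k'$-axis (alternatively, $O(n)$ per pair via binary search is still within budget). Given both tables, each $OPT(i)$ is computed by one loop over $k'$ with $O(1)$ work per iteration, hence all $n$ values in $O(n^2)$ time; the answer is $T[1]=OPT(1)$, and the assignment itself is recovered by backtracking stored pointers in an extra $O(n)$ time, which does not affect the bound. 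Total: $O(n^2)$ time, $O(n^2)$ space.

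The only genuinely delicate point — everything else being routine bookkeeping on top of Lemma~\ref{lem:OPT(i)} — is justifying the midpoint reduction together with the monotone-scan computation of the $c(i,k')$ table, so that is where I would spend the most care: verifying both that replacing the two-index minimization by $k=c(i,k')$ preserves optimality, and that the midpoint moves monotonically in $k'$ so the scan correctly yields the closest point at every step.
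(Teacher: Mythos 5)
Your proposal is correct and follows essentially the same route as the paper: both collapse the inner minimization over $k$ by observing that the only $k$-dependent term is $\max\{|v_iv_k|^\alpha,|v_kv_{k'}|^\alpha\}$, which is minimized at the point $c(i,k')$ closest to the midpoint of $\overline{v_iv_{k'}}$, reducing each $OPT(i)$ to a single $O(n)$ loop over $k'$. Your additional remark on computing the $c(i,k')$ table by a monotone forward scan is a detail the paper leaves implicit, but it is the same algorithm.
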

%
%
\subsection{An Optimal Algorithm for the 1D \MCRAS \ Problem}\label{subsec:spanner}
 
Given a set $S=\{v_1,..,v_n\}$ of points in $1D$ and a value $t \geq 1$,
the 1D \MCRAS\ problem aims to find a minimum cost range assignment 
for $S$, subject to the requirement that the induced graph is a $t$-spanner.
We present a polynomial-time algorithm which solves this problem optimally 
and follows the same guidelines as Algorithm~\AlgMCRA. 

We begin with providing the key notions required for understanding 
the correctness of the algorithm, followed by its description. 
Due to space limitation, we do not supply a formal proof. 
The first and most significant observation, is that the problem can still be divided into two subproblems
in the same way as in Algorithm~\AlgMCRA, by similar arguments to those of Lemma~\ref{lem:OPT(i)}.
In Lemma~\ref{lem:OPT(i)} we show that any assignment that does not satisfy the conditions required for the division
can be adjusted to a new assignment with a lower value of $cost'$ that preserves connectivity.
The new assignment, however, preserves also the lengths of the shortest paths,
which make the argument legitimate for this problem as well. 

The two problems (\MCRA\ and \MCRAS\ ) differ when it comes to solving each of the above subproblems.
Consider the left subproblem, i.e., of the form described in Lemma~\ref{lem:range_line}.
The optimal assignment for it is no longer necessarily the one stated in the lemma,
since it does not ensure the existence of $t$-spanning paths.
Therefore, our algorithm divides problems of this form into smaller subproblems handled recursively (see Fig.~\ref{fig:range_spanner}, right). 
Dealing with such subproblems, requires defining new parameters: a rightmost input point $v_j$, and
the length of the shortest paths connecting $v_i$ to $v_j$, $v_j$ to $v_i$ and $v_i$ to $v_{i+1}$ 
not involving points in $S_{i,j}$ except for the endpoints, 
denoted by $\overrightarrow{\delta}$, $\overleftarrow{\delta}$, and $\delta^i$ 
, respectively.
Regarding the computation of a subproblem, since points may be covered now by vertices outside the subproblem domain,
we allow $v_k$ to have either a right or a left range equals $0$ (in the terms of Algorithm~\AlgMCRA, either $k=i$ or $k=k'$).

Another key observation is that any directed graph $G$ over $S$
is a $t$-spanner for $S$ iff for every $1 \leq i <n$ there exists a $t$-spanning path from
$v_i$ to $v_{i+1}$ and from $v_{i+1}$ to $v_i$. 
Moreover, given that $G$ is strongly connected implies that the addition of an edge between consecutive points
does not effect the length of the shortest path between any other pair of consecutive points.
Therefore, for subproblems with $j=i+1$ we assign $\rho^r(i)=i+1$ (resp. $\rho^l(i+1)=i$) iff 
$\overrightarrow{\delta}/|i,i+1|>t$ (resp. $\overleftarrow{\delta}/|i,i+1|>t$) and thus ensuring that the induced graph is a $t$-spanner.

Our algorithm may consider solutions in which an assignment to a node is charged more than once in the total cost;
however, for every such solution, there exists an equivalent one in which the charging is done properly and
is preferred by the algorithm due its lower cost.

We denote by $OPT(i,j,\overrightarrow{\delta},\overleftarrow{\delta}, \delta^i)$ the cost of an optimal solution
to the sub-problem defined by the input $S_{i,j}$ subject to the parameters 
$\overrightarrow{\delta},\overleftarrow{\delta}$, and $\delta^i$ representing
the lengths of the shortest external paths as defined earlier in this section.
Let $\Delta_{i,j}=\{2|v_l v_i|+|v_i v_j|, 2|v_j v_r|+|v_i v_j| : l \leq i < j \leq r\}$.
We compute $OPT(i,j,\overrightarrow{\delta},\overleftarrow{\delta}, \delta^i)$  
for every $1\leq i<j \leq n$, and the corresponding $\overrightarrow{\delta},\overleftarrow{\delta}, \delta^i \in \Delta_{i,j} $ iteratively, 
while in stage $x$ all subproblems with $j-i=x$ are solved.
The computation is derived from the equalities below.
For simplicity of presentation, we overload notation and write $|i,j|$ to mean $|v_i v_j|$.
In addition, we write $\varnothing$ in place of $\delta^i$ where $\delta^i=\overrightarrow{\delta}$.
\[ 	
\begin{aligned}
	OPT(&i, i+1,\overrightarrow{\delta}, \overleftarrow{\delta}, \delta^i)= \overrightarrow{r} + \overleftarrow{r}, \text{where}\\
		  &\overrightarrow{r} =   	
								\left\{
											\begin{array}{ll}  
								        			 |i,i+1| \text{ (*\emph{assigning $\rho^r(i)=i+1$}*)}&\:  ,\overrightarrow{\delta}/|i,i+1|> t \\
															 0 &\: ,otherwise 
											\end{array}
								\right.
\end{aligned}
\]
and  $\overleftarrow{r}$ is defined symmetrically.
\noindent
For $j>1$, we have
\[ 	
\begin{aligned}
	&OPT(i, j,\overrightarrow{\delta}, \overleftarrow{\delta}, \delta^i)= \\
	&\min_{\substack{i \leq k \leq j \\ k \leq k' \leq j}}   	
					\left\{
						\begin{array}{ll}
							\begin{aligned}  
								    |i,i+1|^\alpha &+ OPT(i,\, i+1, \,|i,i+1|, \,|i+i,j|+\overleftarrow{\delta}, \,\varnothing) \\
								        					&+ OPT(i+1, \,j, \,\infty, \,\overleftarrow{\delta}-|i+1, i|, \,\varnothing) 
								  \end{aligned} &\:  ,i=k \\[4mm]
								\begin{aligned}  
								    |i,i+1|^\alpha &+ OPT(i\,i+1,\,\delta^i,\,|i,i+1|,\,\varnothing) \\
								        					&+ OPT(i+1,\,j,\,|i,i+1|+\overrightarrow{\delta},\,\overleftarrow{\delta}-|i,i+1|,\,\varnothing) 
								 \end{aligned} &\:  ,k=k' \\[4mm]
								  \begin{aligned}  
								    &\max\{|i,k|^\alpha, |k,k'|^\alpha\} \\
								    																	&+ OPT(i,\,i+1,\,\delta^i,\,|i+1,k|+|k,i|,\,\varnothing) \\
								        															&+ OPT(i+1,\,k,\,\infty,\,|k,i+1|,\,\infty) \\
								        															&+ OPT(k,\,k'-1,\,|k,k'-1|,\,\infty,\,|k,k+1|) \\
								        															&+ OPT(k'-1,\,j,\,|k'-1,i|+\overrightarrow{\delta},\,\overleftarrow{\delta}-|i,k'-1|,\,|k'-1,k|+|k,k'|). \\
								  \end{aligned} &\:  ,i \neq k \neq k' \\
							\end{array}
					\right.\\
\end{aligned}
\]
We permit either $i=k$ and then $k'=i+1$ or $k=k'$ and then $k'=i+1$ but not both. 
%
\begin{figure}[thb]
    \centering
        \includegraphics[width=1\textwidth]{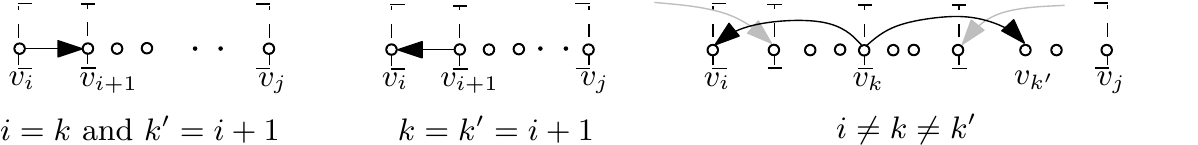}
    \caption{An illustration of the algorithm for the \MCRAS\ problem. 
    The ranges are illustrated in black arrows and 
    the division to subproblems in sashed lines.}
    \label{fig:range_spanner}
\end{figure}
\paragraph*{Complexity.}
Let $\Delta$ be the set of all distinct distances in $S$, then for every $v_i,v_j \in S $, 
$|\Delta_{i,j}|=|\Delta|=O(n)$.
We fill a table with $O(n^2 |\Delta|^3)$ cells, each cell is computed in $O(n^2)$ time,
thus, the total running time is $O(n^4 |\Delta|^3)$.
As we have focused on presenting a simple and intuitive solution, rather than reducing the running time,
a more careful analysis achieves a better bound on the time complexity.
For example, the relevant domain of $\overrightarrow{\delta},\overleftarrow{\delta}$, and $\delta^i$
can be estimated more precisely with respect to $t$.
Moreover, Observation~\ref{obs:lr_ranges} allows reducing the running time by a factor of $n$.
This is done by decreasing the number of relevant combinations of $i$ and $k'$
that have to be checked by the algorithm, for fixed indices $j$ and $k$ with $i < k < k'< j $, to $O(n)$,
using similar arguments to those in Lemma~\ref{lem:OPT(i)} (see Fig.~\ref{fig:OPT(i)_improved}).
\begin{observation}\label{obs:lr_ranges}
Consider an optimal assignment $(\rho^l,\rho^r)$ and a point $v_k \in S$.
Let $\rho^l(k)=i$ and let $k^{i}$ denote the minimal index with $k < k^{i}$ and $|v_k v_{k^{i}}|\geq|v_k v_{i}|$,
then $k^{i+1} \leq \rho^r(k) \leq k^{i}$.
\end{observation}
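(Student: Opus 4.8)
I would prove the two inequalities by symmetric local exchange arguments, in the same spirit as the proof of Lemma~\ref{lem:OPT(i)}: assuming $\rho^r(k)$ violates one of the bounds, I modify $(\rho^l,\rho^r)$ around $v_k$ and one neighbouring vertex to obtain a valid assignment of strictly smaller $cost$, contradicting optimality. Write $r=\rho^l(v_k)=|v_kv_i|$ and $\rho^r(k)=k'$; by Lemma~\ref{lem:range_ij} every range is the exact distance to some input point, and since $i\le k$ we have $|v_kv_{i+1}|\le r$, hence $k^{i+1}\le k^i$. Two elementary facts drive the cost accounting: by collinearity, distances along the line add (e.g.\ $|v_kv_{k'}|=|v_kv_m|+|v_mv_{k'}|$ for $k<m<k'$), and for every $\alpha\ge1$ the map $x\mapsto x^\alpha$ is superadditive on $\mathds{R}^+$, i.e.\ $(a+b)^\alpha\ge a^\alpha+b^\alpha$.

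\textbf{Upper bound $\rho^r(k)\le k^i$.} Suppose $k'>k^i$, so $k<k^i<k'$ and $|v_kv_{k'}|>|v_kv_{k^i}|\ge r$, whence $v_k$ contributes $|v_kv_{k'}|^\alpha$ to $cost$. Modify the assignment by shrinking $v_k$'s right range to $|v_kv_{k^i}|$ and, to preserve the rightward reachability that the deleted edge $v_k\to v_{k'}$ provided, enlarging $v_{k^i}$'s right range to at least $|v_{k^i}v_{k'}|$, leaving all other ranges (and both vertices' left ranges) untouched. Every vertex previously reached from $v_k$ in the rightward direction is now reached via the two hops $v_k\to v_{k^i}\to(\cdot)$, and every vertex still reaches $v_k$ as before, so the new assignment is valid. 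Its $cost$ changes by exactly $a^\alpha-(a+b)^\alpha$ at $v_k$ and by \emph{strictly} less than $b^\alpha$ at $v_{k^i}$ (strictly, because $v_{k^i}$ already pays a positive amount), where $a=|v_kv_{k^i}|\ge r$ and $b=|v_{k^i}v_{k'}|$ with $a+b=|v_kv_{k'}|$; hence the total change is strictly less than $a^\alpha-(a+b)^\alpha+b^\alpha=-\bigl((a+b)^\alpha-a^\alpha-b^\alpha\bigr)\le0$, a contradiction. (If no point lies to the right of $v_k$ at distance $\ge r$ the bound is vacuous.)

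\textbf{Lower bound $\rho^r(k)\ge k^{i+1}$.} Suppose $k'<k^{i+1}$, so $|v_kv_{k'}|<|v_kv_{i+1}|\le r$ and $v_k$ contributes exactly $r^\alpha$ to $cost$ (the degenerate cases $i=k$ and $i+1=k$ are handled separately). Modify by shrinking $v_k$'s left range to $|v_kv_{i+1}|$ and enlarging $v_{i+1}$'s left range to at least $|v_{i+1}v_i|$. The only deleted edge is $v_k\to v_i$, and every path that used it reroutes through $v_k\to v_{i+1}\to v_i$ (both edges now present), so the new assignment is valid. Its $cost$ changes by $a^\alpha-(a+b)^\alpha$ at $v_k$ and by strictly less than $b^\alpha$ at $v_{i+1}$, where $a=|v_kv_{i+1}|$, $b=|v_{i+1}v_i|$, $a+b=r$; as before the total change is strictly less than $-\bigl((a+b)^\alpha-a^\alpha-b^\alpha\bigr)\le0$, a contradiction. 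When $k^{i+1}=k^i$ the two bounds force $\rho^r(k)=k^i$.

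\textbf{Main obstacle.} The step I expect to require the most care is verifying that each modification keeps the induced graph strongly connected: one must check that the deleted long edge of $v_k$ is not the unique link across any portion of the line, which is exactly the rerouting claim sketched above and parallels (but is lighter than) the corresponding step in the proof of Lemma~\ref{lem:OPT(i)}. Beyond that, the only subtleties are the boundary behaviour of the indices $k^i,k^{i+1}$ when $v_k$ lies near an end of the range (to be dealt with by the convention that an undefined bound is vacuous) and the degenerate cases $i\ge k-1$, which can be settled directly using Lemma~\ref{lem:range_12} and the chain structure of Lemma~\ref{lem:range_line}; the cost bookkeeping itself is routine, as in Section~\ref{subsec:Quad}.
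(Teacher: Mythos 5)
The paper never proves this observation --- it is stated bare, with only a figure, in a section whose proofs are explicitly omitted --- so your argument can only be judged on its own terms. Your local-exchange strategy is the natural one, and your upper bound is essentially sound: if $\rho^r(k)=k'>k^i$ then the right range dominates the cost at $v_k$; shrinking it to $|v_kv_{k^i}|$ and extending $\rho^r(v_{k^i})$ to $|v_{k^i}v_{k'}|$ reroutes every deleted edge through $v_{k^i}$ (and, by collinearity, without lengthening any path, which is also why such exchanges remain legitimate in the spanner setting where the observation is actually deployed). The strict saving follows from superadditivity of $x\mapsto x^\alpha$ together with your correct remark that $v_{k^i}$ already pays a positive amount --- every vertex needs a positive outgoing range for strong connectivity --- and this works even in the degenerate case $i=k$, where $k^i=k+1$.

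The genuine gap is in the lower bound, exactly in the degenerate case you defer. When $i=k-1$ the claim reads $\rho^r(k)\geq k^{k}=k+1$, i.e., $v_k$ must have a positive right range, and your exchange (shrink $\rho^l(v_k)$ to $|v_kv_{i+1}|=0$, enlarge $\rho^l(v_{i+1})=\rho^l(v_k)$) degenerates to a no-op. Worse, the statement as literally quantified (``consider an optimal assignment'') can actually fail there, even under the paper's refined optimality. Take $\alpha=1$ and points at $0,10,11,12$, with $\rho^r(v_1)=10$, $\rho^l(v_2)=10$, $\rho^r(v_2)=2$, $\rho^l(v_3)=1$, $\rho^r(v_3)=0$, $\rho^l(v_4)=1$, $\rho^r(v_4)=0$: this is valid with $cost=22$ and $cost'=24$, identical to the neighbour chain, hence optimal; yet for $v_3$ we have $i=2$, $k^{i+1}=4$ and $\rho^r(v_3)=0$. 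So this case cannot be ``settled directly using Lemma~\ref{lem:range_12} and the chain structure of Lemma~\ref{lem:range_line}''. The observation must be restricted to the pivot vertex with both directional ranges positive, $i<k<k'$ (which is how the paper uses it, for fixed $i<k<k'<j$); under that restriction $i+1=k$ forces $k'\geq k+1=k^{i+1}$ automatically and your exchange covers the remaining cases. Alternatively, prove the existential version (some optimal assignment satisfies the bounds), for which your exchanges suffice with non-strict decrease plus a termination argument.
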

\begin{figure}[bht]
    \centering
        \includegraphics[width=0.75\textwidth]{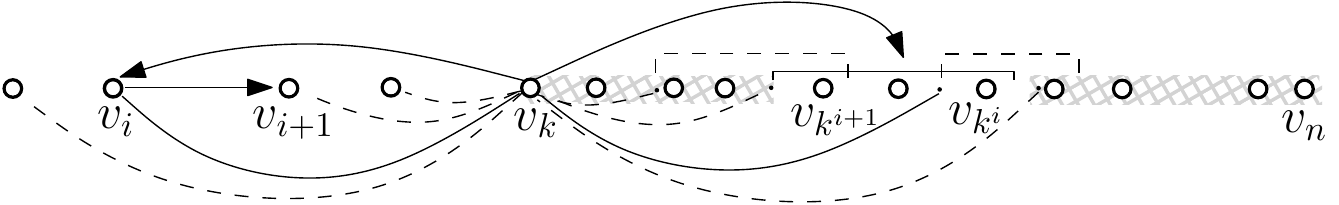}
    \caption{An illustration of Observation~\ref{obs:lr_ranges}.
    				Every pair of symmetric arcs indicates equal distances from $v_k$. 
    				The marked domains indicates the legal values of
    				$\rho^r(k)$ for different values of $i$.}
    \label{fig:OPT(i)_improved}
\end{figure}


\section{The \MCRA\ Problem in Higher Dimensions}\label{sec:PlaneMinCostRange }

In this section we focus on the \MCRA\ problem
for dimension $d \geq 2$ and $\alpha =1$.
As all the versions of the problem for  $d \geq 2$ and $\alpha \geq 1$, it is known to be $NP$-hard.
Currently, the algorithm achieving the best approximation ratio for $\alpha=1$ and ant $d\geq 2$ is the \emph{Hub} algorithm 
with a ratio of $1.5$.
This algorithm was proposed by G. Calinescu, P.J.Wan, and F. Zaragoza for the general metric case, and
analyzed by Amb\"{u}hl et al. in~\cite{Ambuhl05} for the restricted Euclidean case. 
We show a new approximation algorithm and bound its approximation ratio from above 
by $1.5-\epsilon$ for $\epsilon=5/10^5$. Although in some cases our phrasing is restricted to the plane, 
all arguments hold for higher dimensions as well.

In our algorithm we use two existing algorithms, 
the \emph{Hub} algorithm and the algorithm for the $1$D \MCRA\ problem introduced by Kirousis et al.~\cite{Kirousis}, 
to which we refer as the \emph{$1$D RA} algorithm.
We observe that the later algorithm outputs an optimal solution for any ordered set $V=\{v_1,...,v_n\}$ with distance function $h$
that satisfies the following \emph{\orderedline\ } condition:
for every $1 \leq i \leq j < k \leq l \leq n$,
it holds that $h(v_i,v_l) \geq h(v_j,v_k)$. 
We use this algorithm for subsets of the input set that roughly lie on a line.


\subsection{Our Approach}\label{sec:approach}
Presenting our approach requires acquaintance with the \emph{Hub} algorithm.
The \emph{Hub} algorithm finds the minimum enclosing disk $C$ of $S$ centered at point $hub \in S$. 
Then, the algorithm sets $\rho(hub)=r_{min}$, where $r_{min}$ is $C$'s radius.  Finally, 
it directs the edges of $MST(S)$ towards the \emph{hub} and for each directed edge $(v,u)$ sets $\rho(v)=|vu|$.
The cost of this assignment is 
$w(MST(S))+r_{min} \leq w(MST(S))+(w(MST(S))+w(e_{M}))/2,$
where $e_M$ is the longest edge in $MST(S)$ and the weight function $w$ is defined with respect to Euclidean lengths.

To guide the reader, we give an intuition and a rough sketch of our algorithm.
We characterize the instances where the \emph{Hub} algorithm gives a better approximation than 1.5, 
and to generalize these cases we slightly modify it. 
Furthermore, we show an algorithm that prevails in the cases where the modified \emph{Hub} algorithm 
fails to give an approximation ratio lower than 1.5.
Before we elaborate more on the aforementioned characterization,
another piece of terminology. 
Given a graph $G$ over $S$ and two points $p,q \in S$, the \emph{stretch factor} from $p$ to $q$ in $G$
is $\delta_G(p,q)/|pq|$,
where $\delta_G(p,q)$ denotes the Euclidean length of the shortest path between $p$ and $q$ in $G$.
We use $\sim$$large$ when referring to values greater than fixed thresholds, some with respect to $w(MST(S))$, defined later.

Consider $MST(S)$ and its longest path $P_M$. If one of the following conditions holds, then 
the \emph{Hub} algorithm or its modification results in a better constant approximation than $1.5$: \ 
	(A1) there exists a $\sim$$large$ edge in $MST(S)$; \
	(A2) a $\sim$$large$ fraction of $P_M$ consists of disjoint sub-paths connecting pairs of points with $\sim$$large$ stretch factor,
			not dominated by one sub-path of at least half the fraction; or \
  (A3) the weight $w(MST(S)\bs P_M)$ is $\sim$$large$.

Otherwise, there are three possible cases: \
  (B1) the graph $MST(S)$ is roughly a line; \
	(B2) there are two points in $P_M$ with $\sim$$large$ stretch factor,
		i.e., there is a $\sim$$large$ `hill' in $P_M$, and then either $MST(S)$ roughly consists of two  $1$D paths; or \
  (B3) the optimal solution uses edges connecting the two sides of the `hill', covering $\sim$$large$ fraction of it.

The last three cases are approximated using the following method.
We consider every two edges connecting the two sides of the `hill' 
as the edges in the optimal solution that separates the uncovered remains of the path to two independent sub-paths, i.e., not connected by an edge.
(the points in both sub-paths may be connected to the middle covered area.)
Note that such two edges exist.
We direct the covered area to achieve a strongly connected sub-graph and solve each of the two sub-paths separately in two techniques.
The first, using the \emph{$1$D RA} algorithm with a distance function implied by the input, satisfying the \emph{\orderedline\ } condition, 
and applying adjustments on the output,
and the second, using the \emph{Hub} algorithm.
A $(1.5-\epsilon)$-approximation is obtained for cases (B1) and (B2), using the first technique,
and for cases (B1) and (B2), using the second technique.
The algorithm computed several solutions, using the aforementioned methods, and returns the one of minimum cost.

\old{
\begin{figure}[htb]
    \centering
        \includegraphics[width=0.78\textwidth]{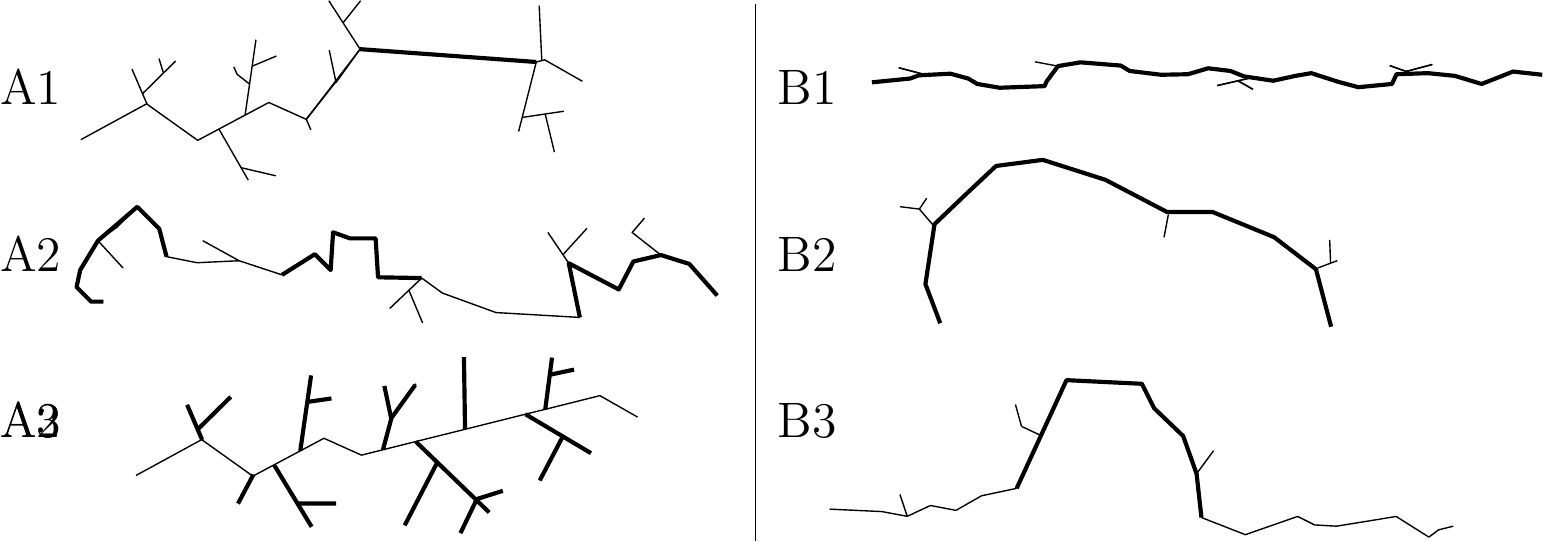}
    \caption{Left, the cases where the required approximation is achieved by the \emph{Hub} algorithm;
    right, the cases where the required approximation is achieved by other methods.}
    \label{fig:approx_approach}
\end{figure}
}


\subsection{The Approximation Algorithm}\label{sec:alg}
The algorithm uses the following three procedures that are defined precisely at the end of the algorithm's description.
\begin{itemize}[leftmargin=*]
\item The \emph{flatten} procedure $f$ - a method performing shortcuts between pairs of points 
			on a given path $P$ resulting in a path without two points of stretch factor greater than $c_s$.
\item The \emph{distance function} $h_S$ - a distance function defined for an ordered set $P\subseteq S$, satisfying the \emph{\orderedline} condition. 
\item The \emph{adjustment} transformation $g$ - a function adjusting an optimal range assignment for an ordered set $P\subseteq S$ with distance function $h$,
			to a valid assignment for $P$.
\end{itemize}
\vspace{0.5mm}
Let $R$ be the forest obtained by omitting from $MST(S$) the edges of its longest path, $PM$.
Given a point $v \in P_M$, 
let $T(v)$ denote the tree of $R$ rooted at $v$.
For every $u \in T(v)$ let $r(u)$ denote the root of the tree in $R$ containing $u$, namely, $v$.
For a set of points $V \subset P_M$, let $T(V)$ denote the union $\bigcup_{v \in V} T(v)$.
For ease of presentation, we assume the path $P_M$ has a \emph{left} and a \emph{right} endpoints, 
thus, the \emph{left} and \emph{right} relations over $P_M$ are naturally defined. \\[2mm]
\textbf{The main algorithm scheme:} \\
%
Compute four solutions and return the one of minimal cost.
In case of multiple assignments to a point in a solution, the maximal among the ranges counts.\\
\underline{Solution} (i): apply the \emph{Hub} algorithm.\\
\underline{Solution} (ii): apply a variant of the \emph{Hub} algorithm - find a point $c \in P_M$ that minimizes the value $r_c=\max\{|cp_1|,|cp_z|\}$, 
where $p_1$ and $p_z$ are the endpoints of the path $P_M$.
Assign $c$ the range $r_c$, direct $P_M$ towards $r_c$ and bi-direct all edges in $R$.\\[1mm]
(* \textit{The rest of the algorithm handles cases (B1)-(B3) defined in Section~\ref{sec:approach}} *)\\[1mm]
\textbf{For} every edge $e \in P_M$ do :
\begin{indentpar}{0.4cm}
			Let $P_{e^l}$ and $P_{e^r}$ be the two paths of $P_M\bs e$, to the left and to the right of $e$, respectively.\\
			Apply the \emph{flatten procedure} $f$ on $P_{e^l}$ and $P_{e^r}$ to obtain the sub-paths\\
			$P_{l'}=(p_{1},p_{2},...,p_{m})$ and $P_{r'}=(p_{m+1},p_{i+2},...,p_z)$, respectively.\\ 
			(* \textit{Note $R$ has been changed during the \emph{flatten procedure}} *)\\
			\textbf{For} every $4$ points $p_{l}, p_{l'}, p_{r'}, p_{r}$
			with $l \leq l' \leq m <  r' \leq r$ in the flattened sub-paths:
			\begin{indentpar}{0.6cm}
	          In both solutions (iii) and (iv) direct the path $P_x=(p_l,...p_m,p_{m+1},...,p_r)$ towards $p_{l}$ 
		        and for each point $p_i$ with $1 \leq i \leq z$ direct $T(p_i)$ towards $p_i$ and assign $p_i$ a range $w(T(p_i))$.		
						Perform the least cost option among the following two, either 
						add the edge $(p_l,p_r)$, or add the two edges, 
						one from $u_l$ to $u_{r'}$ for $u_l \in T(p_l), u_{r'} \in T(p_{r'})$ of minimal length
						and the other from $u_{l'}$ to $u_r$ for $u_{l'} \in T(p_{l'}), u_r \in T(p_r)$ of minimal length.
						(see illustration in Fig.~\ref{fig:alg2}).
						As for the two sub-paths $P_l=(p_{1},p_{2},...,p_{l})$ and $P_r=(p_{r},p_{r+1},...,p_z)$,
						assign them ranges as follows:\vspace{0.1cm} \\
						\underline{Solution} (iii): apply the \emph{Hub} algorithm separately on each sub-path.\\ 
						\underline{Solution} (iv): apply the \emph{$1$D RA} algorithm separately on each sub-path 
						with respect to the \emph{distance function $h_S$}
						and perform the \emph{transformation $g$} on the assignment received.
			\end{indentpar}
\end{indentpar}
 
\begin{figure}[bht]
    \centering
        \includegraphics[width=0.6\textwidth]{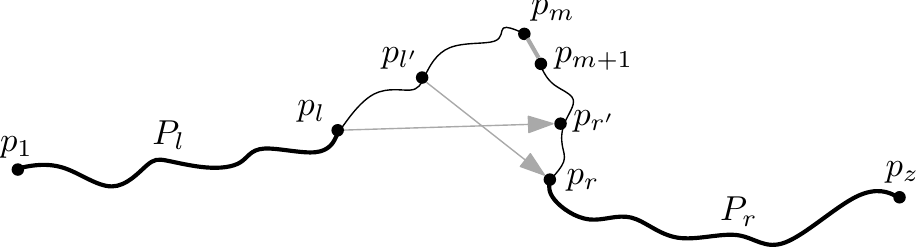}
    \caption{The two sub-paths $P_l$ and $P_r$ as defined in the algorithm.}
    \label{fig:alg2}
\end{figure}

\begin{description}[topsep=0.2cm, itemsep=0.1cm]

\item[The \emph{flatten} procedure $f$.]
Let $c_s = 5/4$. Given a path $P=\{v_i, ..,v_n \}$, set $Q_{P} = \{\}$. 
Let $j>i$ be the maximal index such that $\delta_{P}(v_i,v_j) > c_s|v_i v_j|$.
If such index does not exist, let $j=i+1$.
Else ($j>i+1$), add the edge $(v_i,v_j)$ to $P$, remove the edge $(v_{j-1},v_j)$ from $P$, 
move the sub-path $(v_i,..,v_{j-1})$ from $P$ to the forest $R$, 
and update $Q_{P} = Q_{P} \cup \{ (v_i,v_j)\}$. \
Finally, repeat with the sub-path $(v_j,..,v_n)$ without initializing $Q_P$.
\end{description}


The definitions for $h_S$ and $g$ are given with respect to
the sub-paths $P_l$ and $P_{l'}$, the definitions for the sub-path $P_r$ and $P_{r'}$ are symmetric.
\begin{description}[topsep=0.2cm, itemsep=0.1cm]
\item[The distance function $h_S$.]
For every two points $p_j,p_k$ with $1\leq j \leq k \leq l$ we define,
$$h_S(p_j,p_k) = \min_{ \substack{u \in T(p_{j'}), 1\leq j' \leq j \\ 														  	
																v \in T(p_{k'}), k\leq k' \leq m}}		|uv|.$$

\item[The \emph{adjustment} transformation $g$.]
Given an assignment $\rho':P_l \rightarrow \mathds{R}^+$, 
we transform it into an assignment $g(\rho')=\rho: P_{l'} \rightarrow \mathds{R}^+$. First, we assign ranges a follows:
\[ 
	\rho(p_j)=
		\left\{
		\begin{array}{l}
			\begin{aligned}
							c_s \cdot \rho(p_j)+ c_k \cdot T(p_j), &\: 1 \leq j \leq l, \\
							 c_k \cdot T(p_j), &\: l < j \leq m , 
				\end{aligned}
			\end{array}
		\right.		 	
\]
\end{description}

where $c_k = 1+8(1+c_s) = 19$.
The multiplicity (by $c_s$) handles the gaps caused by points breaking the \emph{\orderedline\ } condition with respect to the Euclidean metric. 
The role of the additive part, together with
the second stage of the transformation, elaborated next, is to overcome the absence of points outside the path.
In the second stage, 
for every $p_j$ with $1 \leq j \leq m$, let $1\leq j^- < j$ be the minimal index
for which there exists $u \in T(p_{j^-})$ with $|p_j u| \leq c_k\cdot w(T(p_j))$,
and let $j< j^+ \leq m$ be the maximal index
for which there exists $u \in T(p_{j^+})$ with $|p_j u| \leq c_k\cdot w(T(p_j))$,
direct the sub-path between $p_{j^-}$ and $p_{j^+}$ towards $p_j$.
See Fig.~\ref{fig:transformation} for illustration.
\begin{figure}[htb]
    \centering
        \includegraphics[width=0.78\textwidth]{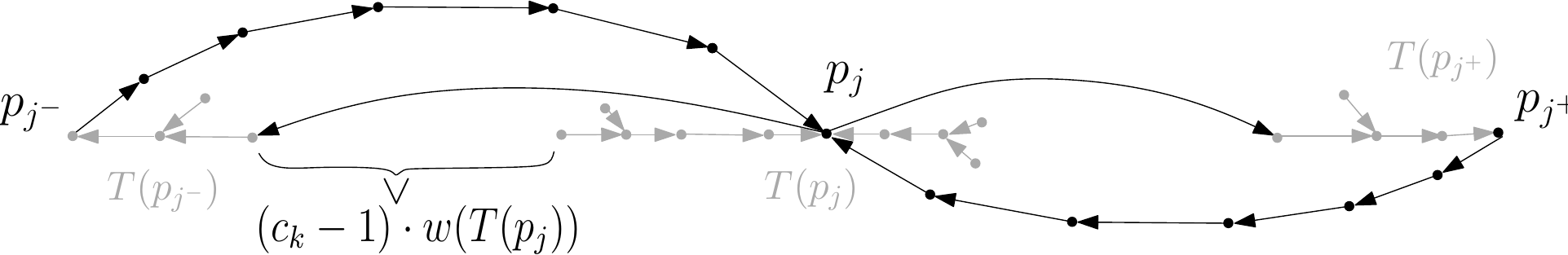}
    \caption{An illustration of the second stage in the adjustment transformation $g$.}
    \label{fig:transformation}
\end{figure}

The indexing of points in $P_M$ and notations introduced in the algorithm are used throughout this section.
The tree $T(p_i) \in R$, for $1 \leq i \leq z$, is sometimes denoted by $T_i$ for short.
%
%
\subsection{The Validity of the Output}\label{sec:validity}
We consider each solution separately and show it forms a \emph{valid} assignment $\rho$.

\noindent
\textbf{Validity of Solution (i).} \
Follows from the validity of the \emph{Hub} algorithm

\noindent
\textbf{Validity of Solution (ii).} \ 
The subgraph of $G_{\rho}$ induced by the points of $P_M$ is strongly connected due to the validity of the \emph{Hub} algorithm.
All trees in $R$ are bi-directed trees sharing a common point with $P_M$
and therefore, the whole graph, $G_{\rho}$, is strongly connected. 

\noindent
\textbf{Validity of Solution (iii).} \
Each tree in $R$ induces a strongly connected subgraph of $G_{\rho}$,
thus, it is suffices to show the connectivity of the minor obtained from $G_{\rho}$
by contracting all trees of $R$.
The subgraph of $G_{\rho}$ induced by the points of the middle sub-path $P_x$ 
forms either one directed cycle or two directed cycles sharing common vertices. 
By the correctness of the \emph{Hub} algorithm, each of the sub-paths $P_l$ and $P_r$ induces 
a strongly connected subgraph of $G_{\rho}$.
In addition, each of them shares a common vertex with the middle sub-path,
thus, the whole graph, $G_{\rho}$, is strongly connected.

\noindent
\textbf{Validity of Solution (iv).} \
Since we already verified the validity of Solution (iii), 
we are only left to show that each of $P_l$ and $P_r$ induce strongly connected subgraphs in $G_{\rho}$.
We consider the sub-path $P_l$, while the case of $P_r$ is symmetric.

Let $\rho_{l}: P_l\rightarrow \mathds{R}^+$ denote the assignments obtained by applying 
the $1$D algorithm on the sub-path $P_l$ with respect to $h_S$.
Due to the validity of $\rho_{l}$, the graph induced by $\rho_{l}$ with respect to $h_S$
is strongly connected.
Let $(p_i,p_j)$ be an edge in this graph, we show that there exists a directed path from $p_i$ to $p_j$ in $G_{\rho}$. 
Assume w.l.o.g., $1 \leq i < j \leq l$.
By the definition of $h_S$, there exist $u \in T(p_{i'})$ and $v \in T(p_{j'})$
with $1 \leq i' \leq i < j \leq j'\leq m$, such that $\rho_l (p_i) \geq h_S (p_i,p_j)=|uv|$.
Since the final assignment $\rho$ is obtained after applying the adjustment transformation $g$,
we have $\rho(p_i) \geq c_s |uv|$.
\begin{description}[topsep=0.2cm, itemsep=0.1cm]
\item[Case 1:] $|uv| \leq (c_k-1)\cdot w(T(p_{i'}))$ or $|uv| \leq (c_k-1)\cdot w(T(p_{j'}))$.\\
							 Assume, w.l.o.g., that the second condition holds, then $|p_{j'}u| \leq c_k \cdot w(T(p_{j'}))$ 
							 and thus, by the definition of transformation $g$, 
							 the directed path from $p_{i'}$ to $p_{j'}$ and the edge $(p_{j'}, u)$ are contained in $G_{\rho}$.
							 Together with the directed path in $T(p_{i'})$ from $u$ to the root $p_{i'}$ they form a cycle containing both $p_{i}$ and $p_{j}$.
\item[Case 2:] both $|uv| > (c_k-1)\cdot w(T(p_{i'}))$ and $|uv| > (c_k-1)\cdot w(T(p_{j'}))$.\\
							 Since $p_{i'},p_{j'} \in P_{l'}$, and $P_{l'}$ is the output sub-path after performing
							 the \emph{flatten} procedure, then $\delta_{P_{l'}}(p_{i'},p_{j'}) \leq c_s|p_{i'} p_{j'}|$.
							 Therefore, we have
							 \begin{align*}
							 |p_i p_j| \ &\leq \ \delta_{P_{l'}}(p_i, p_j) 
							 					 \	= \ \delta_{P_{l'}}(p_{i'},p_{j'})-(\delta_{P_{l'}}(p_{i'},p_{i})+\delta_{P_{l'}}(p_{j'},p_{j}))\\
							 						&\leq c_s|p_{i'} p_{j'}|-(|p_i p_j|-(w(T_{i'})+w(T_{j'})+|u v|))\\
							 						&\leq c_s(w(T_{i'})+w(T_{j'})+|uv|)-|p_i p_j|+w(T_{i'})+w(T_{j'})+|u v|\\
							 						&\leq (c_s+1)(w(T_{i'})+w(T_{j'}))+(c_s+1)|uv|-|p_i p_j|\\
							 						&\leq (c_s+1)(2|uv|/(c_k-1))+(c_s+1)|uv|-|p_i p_j|\\
							 						&\leq (c_s+1)(2|uv|/(8(1+c_s)))+(c_s+1)|uv|-|p_i p_j|\\
						\Rightarrow |p_i p_j| &\leq (\frac{1+c_s}{2}+\frac{1}{8})|uv| = c_s|uv|.
							 \end{align*}
					
\end{description}


\subsection{The Approximation Ratio}

Let $SOL$ denote the cost of the output of the algorithm for the input set $S$
and let $\rho^*: S \rightarrow \mathds{R}^+$ denote an optimal assignment for $S$ of cost $OPT$.
We show that $SOL \leq (1.5-\epsilon)OPT$.

Let $W=w(MST(S))$, $r = w(R)/W$, $e_{M}$ denote the longest edge in $MST(S)$
and $l = w(e_{M})/W$.
As shown in~\cite{Ambuhl05}, $OPT \geq W+w(e_M)=W(1+l)$.
Next we show several upper bounds on the ratio $SOL/OPT$, 
corresponding to the four solutions computed during the algorithms and finally conclude that the minimum
among them equals at most $(1.5-\epsilon)$.\\[2mm]
%
%
\noindent
\textbf{Approximation bound for Solution (i).} \
Due to the analysis of the \emph{Hub} algorithm done in~\cite{Ambuhl05},
we have $SOL \leq W+(w(P_{M})+w(e_{M}))/2 = W(1.5-r/2+l/2)$.
Therefore, 
\begin{align}\label{eq:i}
\frac{SOL}{OPT} \leq \frac{W(1.5-r/2+l/2)}{W(1+l)} = \frac{1.5-(r-l)/2}{1+l}.
\end{align}
Assume $SOL > (1.5-\epsilon)OPT$, then 
\begin{align*}
1.5-\epsilon < \frac{SOL}{OPT} < \frac{1.5-(r-l)/2}{1+l} < 1.5-r/2,  
\end{align*}
implies $r < 2\epsilon$ and the following corollary follows.
\begin{corollary}\label{cor:Soli}
One of the following holds, $SOL \leq (1.5-\epsilon)OPT$ or $r < 2\epsilon$.
\end{corollary}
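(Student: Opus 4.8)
The statement to prove is Corollary~\ref{cor:Soli}: either $SOL \leq (1.5-\epsilon)OPT$ or $r < 2\epsilon$. The plan is to argue by contraposition, assuming $SOL > (1.5-\epsilon)OPT$ and deriving $r < 2\epsilon$. First I would record the three facts already established in this subsection: the lower bound $OPT \geq W(1+l)$ from~\cite{Ambuhl05}, the upper bound $SOL \leq W(1.5 - r/2 + l/2)$ coming from the analysis of the \emph{Hub} algorithm applied to Solution~(i) (this uses $w(P_M) = W - w(R) = W(1-r)$ and $w(e_M) = Wl$), and the resulting inequality~\eqref{eq:i}, namely $SOL/OPT \leq (1.5-(r-l)/2)/(1+l)$.

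The core of the argument is the short chain of inequalities already displayed: under the assumption $SOL > (1.5-\epsilon)OPT$ we get
\[
1.5 - \epsilon < \frac{SOL}{OPT} \leq \frac{1.5 - (r-l)/2}{1+l}.
\]
The one nontrivial step is to justify $\frac{1.5 - (r-l)/2}{1+l} < 1.5 - r/2$. I would verify this by clearing the denominator: it is equivalent to $1.5 - (r-l)/2 < (1.5 - r/2)(1+l) = 1.5 - r/2 + 1.5 l - rl/2$, i.e. to $l/2 < 1.5 l - rl/2$, i.e. to $0 < l(1 - r/2)$. Since $l = w(e_M)/W > 0$ and $r = w(R)/W < 2$ (the longest path $P_M$ is nonempty, so $w(R) < W$, in fact $r \le 1$), the quantity $l(1-r/2)$ is strictly positive, so the inequality holds. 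Hence $1.5 - \epsilon < 1.5 - r/2$, which rearranges to $r < 2\epsilon$, completing the contrapositive and therefore the corollary.

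The main (and only real) obstacle is the bookkeeping in that middle inequality — making sure the direction is right and that the edge case $l = 0$ (no edges, i.e. $|S| \le 1$) is either excluded as trivial or handled by noting $SOL = OPT = 0$ so the first alternative holds vacuously. Everything else is a direct substitution of the already-proven bounds, so I do not expect any conceptual difficulty beyond presenting the algebra cleanly.
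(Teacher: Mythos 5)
Your proposal is correct and follows essentially the same route as the paper: assume $SOL > (1.5-\epsilon)OPT$, combine with inequality~(\ref{eq:i}), and use $\frac{1.5-(r-l)/2}{1+l} \leq 1.5-r/2$ to conclude $r<2\epsilon$. Your explicit verification of that middle inequality (reducing it to $0 \le l(1-r/2)$) and your remark on the degenerate case $l=0$ are just careful fillings-in of steps the paper leaves implicit.
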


The following lemma is crucial for introducing the bounds for Solutions (ii) and (iv).
\begin{lemma}\label{lem:sf}
Let $c_s$ and the notation $Q_{P}$ be defined as in procedure $f$. Given a constant $\delta$,  
\begin{enumerate}
\item there exist two pairs of points $(u,v),(y,w)$ connecting two disjoint sub-paths in $P_M$, each pair with stretch factor greater than $c_s$, 
		 that satisfy $\delta_{P_M}(u,v) \geq \delta$ and $\delta_{P_M}(y,w) \geq \delta$; or
\item there exists an edge $e \in {P_M}$ defining $P_{e^l}$ and $P_{e^r}$ (the two paths of $P_M\bs e$, to the left and to the right of $e$, respectively),
			such that for every $(u,v)\in Q_{P_{e^l}}\cup Q_{P_{e^r}}$,  $\delta_{P_M}(u,v)<\delta$.
\end{enumerate}
\end{lemma}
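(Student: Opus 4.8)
\textbf{Proof plan for Lemma~\ref{lem:sf}.}

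The plan is to set up a dichotomy directly from the way procedure $f$ processes paths. Recall that running $f$ on $P_M$ from the left endpoint produces a set $Q_{P_M}$ of ``shortcut'' pairs $(v_i,v_j)$, each having $\delta_{P_M}(v_i,v_j) > c_s |v_i v_j|$ (stretch factor greater than $c_s$) and corresponding to pairwise disjoint sub-paths of $P_M$ (the sub-paths $(v_i,\dots,v_{j-1})$ moved into $R$ are disjoint, since $f$ continues from $v_j$). So $Q_{P_M}$ is exactly the kind of collection we need to reason about. First I would consider the quantity $\delta_{P_M}(u,v)$ for the pairs $(u,v)\in Q_{P_M}$, sorted by position along $P_M$, and ask how many of them have $\delta_{P_M}(u,v)\geq\delta$.

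Case one: at least two pairs $(u,v),(y,w)\in Q_{P_M}$ satisfy $\delta_{P_M}(\cdot,\cdot)\geq\delta$. Since all pairs in $Q_{P_M}$ connect pairwise disjoint sub-paths and each has stretch factor greater than $c_s$, these two pairs immediately witness alternative~1 of the lemma. Case two: at most one pair in $Q_{P_M}$ has $\delta_{P_M}$-length $\geq\delta$. If there is no such pair at all, then the edge $e$ can be taken to be, say, the leftmost edge of $P_M$ (so that one of $P_{e^l},P_{e^r}$ is empty), and since running $f$ on a sub-path of $P_M$ only produces shortcut pairs that are themselves sub-pairs of pairs producible on $P_M$ — more precisely, every shortcut pair has $\delta$-length bounded by the $\delta$-length of the corresponding pair in the full run — alternative~2 holds. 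If there is exactly one ``long'' pair $(u_0,v_0)\in Q_{P_M}$, I would choose $e$ to be an edge of $P_M$ lying strictly between $u_0$ and $v_0$ — such an edge exists because $\delta_{P_M}(u_0,v_0)\geq\delta>0$ forces $v_0$ to be at least two positions past $u_0$. Removing $e$ splits the offending sub-path, so that when $f$ is run afresh on $P_{e^l}$ and on $P_{e^r}$, neither side can reconstruct a shortcut pair spanning the whole of $(u_0,\dots,v_0)$; every shortcut pair produced on either side lies entirely within one of the two halves and hence has $\delta_{P_M}$-length strictly less than $\delta_{P_M}(u_0,v_0)$ — but I must still argue it is $<\delta$, not merely $<\delta_{P_M}(u_0,v_0)$.

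That last point is where the real care is needed, and I expect it to be the main obstacle: a shortcut pair produced by $f$ on $P_{e^l}$ need not correspond to any single pair from the full run $Q_{P_M}$, because $f$ is greedy and removing $e$ changes which maximal indices it selects. The clean way around this is to prove a monotonicity/locality statement about $f$: if $f$ is run on a path $P$ and on a sub-path $P'\subseteq P$ sharing the left endpoint, then for every $(u,v)\in Q_{P'}$ there is $(u',v')\in Q_P$ with $[u,v]\subseteq[u',v']$ along $P$ (in particular $\delta_{P}(u,v)\leq\delta_{P}(u',v')$). Granting this, in case two each of $P_{e^l},P_{e^r}$ shares an endpoint with $P_M$, so every element of $Q_{P_{e^l}}\cup Q_{P_{e^r}}$ is dominated by some element of $Q_{P_M}$; the only elements of $Q_{P_M}$ with $\delta_{P_M}$-length $\geq\delta$ is (at most) $(u_0,v_0)$, and since $e$ was placed strictly inside $(u_0,\dots,v_0)$, no pair in $Q_{P_{e^l}}\cup Q_{P_{e^r}}$ can contain $(u_0,v_0)$; hence every such pair is dominated by a pair of $\delta_{P_M}$-length $<\delta$, giving alternative~2. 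The locality lemma itself follows by induction on the number of iterations of $f$, comparing the greedy choice of the maximal index $j$ on $P$ versus on $P'$ at the shared prefix and observing that a shortcut on $P'$ is always ``refined'' by the (possibly longer) shortcut chosen on $P$; this is the step I would write out most carefully.
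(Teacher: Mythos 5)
There is a genuine gap, and it sits exactly at the point you flagged. Your plan is to run $f$ once on all of $P_M$, and if $Q_{P_M}$ contains at most one pair of $\delta_{P_M}$-length $\geq\delta$, say $(u_0,v_0)$, to cut at an edge $e$ strictly inside it and invoke a locality lemma saying every pair of $Q_{P_{e^l}}\cup Q_{P_{e^r}}$ is contained in some pair of $Q_{P_M}$. Even granting that lemma for $P_{e^l}$ (where it does hold, since $P_{e^l}$ is a prefix and the two runs of $f$ coincide until they reach $u_0$), the conclusion you draw from it — ``no pair can contain $(u_0,v_0)$, hence every pair is dominated by a pair of length $<\delta$'' — is a non sequitur. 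The pairs of $Q_{P_{e^l}}$ produced after the runs diverge all lie inside $[u_0,v_0]$, so the only element of $Q_{P_M}$ dominating them is $(u_0,v_0)$ itself, whose length is $\geq\delta$; domination therefore gives no bound below $\delta$. And the bad case is real: if $\delta_{P_M}(u_0,v_0)$ is much larger than $\delta$, the fresh run of $f$ on $[u_0,a]$ (where $e=(a,b)$) can legitimately output a pair $(u_0,v_{j_1})$ with stretch factor $>c_s$ and $\delta_{P_M}(u_0,v_{j_1})\geq\delta$, so condition 2 fails for your $e$, while this pair is nested inside $(u_0,v_0)$ and so does not combine with it to witness condition 1. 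A second, independent hole: your locality lemma is stated for sub-paths sharing the left endpoint, but $f$ is run on $P_{e^r}$ starting from the cut vertex $b$, which lies in the interior of $[u_0,v_0]$; its first shortcut can straddle $v_0$ and is not contained in any single pair of $Q_{P_M}$, so the lemma does not apply to $Q_{P_{e^r}}$ at all.

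The paper avoids both problems by not fixing $e$ in advance. It sweeps $e$ from the rightmost edge of $P_M$ leftward: whenever condition 2 fails for the current $e$, the failure is due to exactly one long pair in $Q_{P_{e^l}}$ (two would be disjoint by construction of $f$ and would immediately give condition 1), and the sweep continues until $Q_{P_{e^l}}$ has no long pair, which certainly happens once $P_{e^l}$ is a single edge. If at the terminating position $Q_{P_{e^r}}$ still contains a long pair $(u,v)$, the boundary analysis shows $u$ is shared with the long pair present at the previous position of $e$, producing a point $w\in P_{e^l}$ such that $(w,u)$ and $(u,v)$ are two edge-disjoint long-stretch pairs of length $\geq\delta$, i.e., condition 1 holds. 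To repair your argument you would need to replace the single a priori choice of $e$ by some such adaptive choice; as written, the dichotomy does not close.
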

\begin{proof}
If the first condition holds, we are done. Otherwise, fix $e$ to be the rightmost edge in $P_M$. 
If the second condition does not hold for $e$, then there exists exactly one pair $(u,v)\in Q_{P_{e^l}}$
with $\delta_{P_M}(u,v)\geq \delta$. Replace $e$ with the consecutive edge to its left in ${P_M}$.
Continue the process until for every $(u,v)\in Q_{P_{e^l}}$,  $\delta_{P_M}(u,v)<\delta$. 
Note that this condition hold when $P_{e^l}$ contains a single edge.
If the process ends with a separating edge $e$ for which there exists a pair $(u,v)\in Q_{P_{e^r}}$
with $\delta_{P_M}(u,v)\geq \delta$, then $u$ is a common endpoint with the preceding edge in the process and there exists a point $w\in P_{e^l}$
such that the two pairs $(w,u),(u,v)$ satisfy the first condition.
\end{proof}

%
\noindent
\textbf{Approximation bound for Solution (ii).} \
Consider the value $c_h W$, where $c_h=20 \epsilon$. 
One of the two conditions of Lemma~\ref{lem:sf}, denoted by L\ref{lem:sf}.1 and L\ref{lem:sf}.2, respectively, must hold for $\delta = c_h W$.
We start by assuming that condition L\ref{lem:sf}.1 holds which leads to Lemma~\ref{lem:approx1}.

For every $q \in P_M$, we have that the cost of Solution (ii) 
equals at most $w(P_M)+\max\{|q p_1|,|q p_z|\}+2w(R)$. 
Consider the path $\sim$$f(P_M)$ obtained from $P_M$ after applying the \emph{flatten} procedure $f$ on the sub-paths $P_{e^l}$ and $P_{e^r}$.
Note that the length of this path is at most $w(P_M)-2c_h W(1-1/c_s)$ and it shares common endpoints with $P_M$. 
Let $\tilde{c}$ be the point on $\sim$$f(P_M)$ closest to its midpoint. 
The midpoint may lie on an edge of $P_M$ or on a shortcut performed by $f$.
If it lies on a shortcut, we undo it and only one shortcut remains.
Thus, the point $\tilde{c}$ is at Euclidean distance at most \
\(  \frac{1}{2}[ w(P_M)-c_h W(1-\frac{1}{c_s})+w(e_M) ]    \), \
from both endpoints and we have
\[ SOL \leq  W(1-r+ \frac{1}{2}[(1-r)-c_h(1-\frac{1}{c_s})+l] +2r)
		 =  	W[1.5-\frac{1}{2}(c_h(1-\frac{1}{c_s})-l-r)]. \]
%
%
This implies, 
\begin{align}\label{eq:ii}
\frac{SOL}{OPT} \leq \frac{1.5-\frac{1}{2}(c_h(1-\frac{1}{c_s})-l-r)}{1+l}.
\end{align}

\begin{lemma}\label{lem:approx1}
If condition L\ref{lem:sf}.1 holds for $\delta = c_h W$
then $SOL \leq (1.5-\epsilon)OPT$. 
\end{lemma}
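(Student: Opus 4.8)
\textbf{Proof plan for Lemma~\ref{lem:approx1}.}
The plan is to combine the upper bound~\eqref{eq:ii} on $SOL/OPT$ with the extra savings that condition L\ref{lem:sf}.1 provides. The point of L\ref{lem:sf}.1 is that there are \emph{two} disjoint sub-paths of $P_M$, each of length at least $\delta = c_h W$ and each connecting a pair of points with stretch factor greater than $c_s$. Running the \emph{flatten} procedure $f$ on $P_{e^l}$ and $P_{e^r}$ therefore performs at least one genuine shortcut inside each of these two sub-paths (a shortcut is triggered precisely when the stretch factor exceeds $c_s$), so the flattened path $\sim$$f(P_M)$ is shorter than $P_M$ by a total of at least $2 c_h W(1-\tfrac1{c_s})$, not just by the single-shortcut amount used in the generic bound. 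First I would record this length estimate carefully: $w(\sim$$f(P_M)) \le w(P_M) - 2 c_h W(1-\tfrac1{c_s})$, with the two shortcuts lying in disjoint portions of the path so their savings add.

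Next I would locate the center point $\tilde c$ on $\sim$$f(P_M)$. Since the two shortcuts are disjoint, at most one of them can straddle the midpoint; undoing that one still leaves the other shortcut in force, so the worst case for the distance from $\tilde c$ to the endpoints keeps the savings of one full shortcut of length at least $c_h W(1-\tfrac1{c_s})$, exactly as in the displayed computation preceding~\eqref{eq:ii}. Hence the bound $SOL \le W[\,1.5 - \tfrac12(c_h(1-\tfrac1{c_s}) - l - r)\,]$ and inequality~\eqref{eq:ii} hold under L\ref{lem:sf}.1. Now I would invoke Corollary~\ref{cor:Soli}: either we are already done ($SOL \le (1.5-\epsilon)OPT$), or $r < 2\epsilon$; and similarly, by inequality~\eqref{eq:i} applied as in the derivation of Corollary~\ref{cor:Soli}, we may assume $l < 2\epsilon$ as well (otherwise Solution~(i) already gives the bound). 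So it suffices to treat the regime $r, l < 2\epsilon$.

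In that regime I would just plug in the numbers. With $c_s = 5/4$ we have $1 - 1/c_s = 1/5$, and with $c_h = 20\epsilon$ the savings term is $c_h(1-\tfrac1{c_s}) = 4\epsilon$, while $l + r < 4\epsilon$. That cancellation is too tight, so the actual argument must be slightly sharper: since $r < 2\epsilon$, the "$+2r$" term in the cost of Solution~(ii) is itself at most $4\epsilon W$, and one should carry the $-r$ contributions from the $(1-r)$ factors rather than discarding them. I would redo the arithmetic keeping every $r$-term, obtaining a bound of the form $SOL/OPT \le (1.5 - a\epsilon + O(\epsilon l + \epsilon r))/(1+l)$ for an explicit constant $a>0$ coming from $c_h(1-\tfrac1{c_s})$, and then use $l, r < 2\epsilon$ to absorb the error terms and $1/(1+l) \le 1$ to conclude $SOL/OPT \le 1.5 - \epsilon$ for $\epsilon = 5/10^5$. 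The main obstacle is exactly this bookkeeping: verifying that the constant $c_h = 20\epsilon$ is chosen so that the guaranteed savings $c_h(1-\tfrac1{c_s})$ strictly dominates the combined penalties $l + r + (\text{lower-order }\epsilon\text{ terms})$ in the worst case $r, l \to 2\epsilon$, which is what forces $\epsilon$ to be as small as $5/10^5$.
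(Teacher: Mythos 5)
Your reconstruction of the geometric part is faithful to the paper: condition L\ref{lem:sf}.1 gives two disjoint shortcut opportunities, at most one of which straddles the midpoint of $\sim$$f(P_M)$, so after undoing that one a full saving of $c_hW(1-\frac1{c_s})$ survives and inequality~(\ref{eq:ii}) holds; and Corollary~\ref{cor:Soli} supplies $r<2\epsilon$. The gap is in the final arithmetic, which is the entire substance of the lemma. Writing out (\ref{eq:ii}) with $c_h(1-\frac1{c_s})=4\epsilon$, the numerator is $1.5-2\epsilon+\frac{l}{2}+\frac{r}{2}$; the terms $\frac{l}{2}$ and $\frac{r}{2}$ are genuine first-order penalties (the $\frac r2$ comes from paying $2w(R)$ to bi-direct the trees of $R$ against a saving of only about $\frac32 w(R)$ elsewhere), so your hoped-for form $1.5-a\epsilon+O(\epsilon l+\epsilon r)$ never materializes, and "keeping every $r$-term" does not create extra slack. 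Consequently your stated plan --- bound $l<2\epsilon$ as well and then use $1/(1+l)\leq 1$ --- only yields $SOL/OPT< 1.5-2\epsilon+\epsilon+\epsilon=1.5$, which is exactly the "too tight" cancellation you flagged but did not resolve.

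The paper closes this differently and needs no bound on $l$ at all: since $A:=\frac12\bigl(c_h(1-\frac1{c_s})-r\bigr)<1$, one has
\[
\frac{1.5-A+\frac{l}{2}}{1+l}\;\leq\;1.5-A ,
\]
because the $+\frac{l}{2}$ upstairs is dominated by the $(1.5-A)\,l$ produced by expanding the denominator (equivalently, the ratio is nonincreasing in $l$, so the worst case is $l=0$). That leaves $SOL/OPT<1.5-\frac12(4\epsilon-r)<1.5-\frac12(4\epsilon-2\epsilon)=1.5-\epsilon$, a contradiction. So the missing idea is precisely this absorption of the $l$-penalty into the denominator $1+l$ coming from $OPT\geq W(1+l)$; without it the constants $c_h=20\epsilon$, $c_s=5/4$ leave zero margin and the lemma does not follow.
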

\begin{proof}
Assume towards contradiction that $SOL > (1.5-\epsilon)OPT$, then
by Corollary~\ref{cor:Soli}, $r<2\epsilon$ and together with equation~(\ref{eq:ii}) we receive \\ 
%
\begin{align*}
1.5-\epsilon & < \frac{1.5-\frac{1}{2}(c_h(1-\frac{1}{c_s})-l-r)}{1+l} \ < \ 1.5-\frac{1}{2}(c_h(1-\frac{1}{c_s})-r) \\
						 & <	1.5-\frac{1}{2}(20\epsilon(\frac{1}{5})-2\epsilon) \ <	\ 1.5-\epsilon \qquad  \qquad \Rightarrow\ \text{ \ \ contradiction.}						 
\end{align*}
\end{proof}
From now on, assume condition L\ref{lem:sf}.2 holds for $\delta = c_h W$
and let $e \in P_M$ be the edge satisfying the condition.
Let $t = (\sum_{(u,v)\in Q_{P_{e^l}}}\delta_{P_M}(u,v) + \sum_{(u,v)\in Q_{P_{e^r}}}\delta_{P_M}(u,v))\ / \ W$,
%
we give an additional bound to the cost of Solution (ii)
using the same arguments used for the case where condition L\ref{lem:sf}.1 holds. 
Let $\tilde{c}$ be the point on $\sim$$f(P_M)$ closest to its midpoint.
Since every pair $(u,v)\in Q_{P_{e^l}}\cup Q_{P_{e^r}}$ satisfies $\delta_{P_M}(u,v) \leq c_h W$,
the point $\tilde{c}$ is at Euclidean distance at most \ \( \frac{1}{2} [ w(P_M)-(t-c_h) \cdot W(1-\frac{1}{c_s})+w(e_M)] \)  \
%
from both endpoints, thus 
\begin{align}\label{eq:iii}
\frac{SOL}{OPT} \leq \frac{1.5-\frac{1}{2}((t-c_h)(1-\frac{1}{c_s})-l-r)}{1+l}.
\end{align}
Note that although the analysis for equation~\ref{eq:iii} considers the path $\sim$$f(P_M)$,
the \emph{flatten} procedure $f$ is performed after Solution (ii) is computed.

In the analysis of Solution (iii) and (iv) we have $w(R)\leq(r+t)W$, since 
applying $f$ on $P_{e^l}$ and $P_{e^r}$ moves portions of the paths connecting points in $Q_{P_{e^l}}\cup Q_{P_{e^r}}$ to $R$.
Consider the iteration of the algorithm for the edge $e$ and a choice of $4$ points $p_{l}, p_{l'}, p_{r'}, p_{r} \in P_M$ satisfying:
$p_{r}$ is the rightmost point in $P_M$ with $u_r \in T(p_r)$ 
connected (at any direction) in $G_{\rho^*}$ to a point in $T(p_{l'})$ for $p_{l'}$ to the left of $e$ 
and, symmetrically, $p_{l}$ is the leftmost point in $P_M$ with $u_l \in T(p_l)$ 
connected (at any direction) to a point in $T(p_{r'})$ for $p_{r'}$ to the right of $e$.
Meaning, there is no edge in $G_{\rho^*}$ connecting between a point in $T(p)$ for $p \in P_l\bs \{p_l\}$ and a point in $T(q)$  for $q \in P_{r'}$
and no edge connecting between a point in $T(p)$ for $p \in P_r\bs \{p_r\}$ and a point in $T(q)$ for $q \in P_{l'}$.
Let $x$ denote the ratio $ w(P_x)  /  W. $
\\[2mm]
%
%
\noindent
\textbf{Approximation bound for Solution (iii).} \
%
Preforming the \emph{Hub} algorithm on $P_l$ and $P_r$, separately, result
in two assignments with a total cost of at most $1.5(W-w(P_x)-w(R)) + w(e_M)/2 + w(e_M)/2 $.
Directing the path $P_x$ and the trees in $R$, and assigning all roots in $R$ their assignment, 
together with adding the two edges, $(u_l, u_{r'})$ for $u_l \in T(p_l), u_{r'} \in T(p_{r'})$ of minimal length
and $(u_{l'},u_r)$ for $u_{l'} \in T(p_{l'}), u_r \in T(p_r)$ of minimal length 
(or the edge $(p_l,p_r)$ instead if it is cheaper)
costs at most \
\(w(P_x)+ 2\cdot w(R) + |u_l u_{r'}|+|u_{l'} u_{r}|. \)
Overall, we have a total cost of at most \
\( W + \frac{1}{2}(1-x + (r+t)) \cdot W + l W + |u_l u_{r'}|+|u_{l'} u_{r}|.\)

Since there is an edge connecting a point in $T(p_l)$ with a point in $T(p_{r'})$ (in some direction)
and an edge connecting a point in $T(p_{l'})$ with a point in $T(p_r)$ in
the optimal solution, we have $OPT \geq W + |u_l u_{r'}|+|u_{l'} u_{r}| - w(e_M)$,
hence,
\begin{align}\label{eq:iv}
\frac{SOL}{OPT} & \leq \frac{ W(1-l+ \frac{1}{2}(1-x + (r+t)) + 2 l) +|u_l u_{r'}|+|u_{l'} u_{r}| }
														{W(1-l) + |u_l u_{r'}|+|u_{l'} u_{r}|} \nonumber\\
								& \leq 1+ \frac{W(\frac{1}{2}(1-x + (r+t)) + 2l)}
															 {W(1-l) + |u_l u_{r'}|+|u_{l'} u_{r}|} \leq 1+ \frac{1}{2}(1-x + (r+t)) + 2l.
\end{align}
%
%

\noindent
\textbf{Approximation bound for Solution (iv).} \
Let $\rho_{l}: P_l\rightarrow \mathds{R}^+$ and $\rho_{r}: P_r\rightarrow \mathds{R}^+$ 
denote the assignments obtained by applying the \emph{$1$D RA} algorithm 
on $P_l$ and $P_r$, respectively, with respect to the distance function $h_S$. 
Let $\rho': P_l\cup P_r \rightarrow \mathds{R}^+$ denote the union of the two assignments
and let $OPT'$ denote the cost of $\rho'$, i.e., $OPT' = \sum_{v \in P_l \cup P_r} \rho'(v)$.
%
\begin{claim}\label{cl:OPT'}
$OPT' \leq OPT$.
\end{claim}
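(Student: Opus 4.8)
\textbf{Proof proposal for Claim~\ref{cl:OPT'}.}

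The plan is to exhibit, for each of the two sub-paths, a feasible range assignment with respect to the distance function $h_S$ whose total cost is at most the cost that the optimal global assignment $\rho^*$ spends on the relevant vertices, and then to observe that the union of these two assignments has cost at most $OPT$ while $\rho_l$ and $\rho_r$, being optimal outputs of the \emph{1D RA} algorithm for the \orderedline\ instances $(P_l,h_S)$ and $(P_r,h_S)$, can only be cheaper. Concretely, I would focus on $P_l$ (the case of $P_r$ being symmetric) and define a candidate assignment $\hat\rho_l:P_l\to\mathds{R}^+$ derived from $\rho^*$ by ``collapsing'' each tree $T(p_j)$, $1\le j\le l$, onto its root $p_j$: for a point $p_j\in P_l$, set $\hat\rho_l(p_j)$ to be the largest $h_S$-distance $h_S(p_j,p_k)$ such that in $G_{\rho^*}$ there is an edge leaving $T(p_j)$ and reaching $T(p_{j'})$ for some $j'\ge k$ (and symmetrically for edges going left). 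The point of the definition of $h_S$ — namely $h_S(p_j,p_k)=\min\{|uv| : u\in T(p_{j'}),\,j'\le j,\ v\in T(p_{k'}),\,k'\ge k\}$ — is exactly that any such edge of $\rho^*$ of Euclidean length $\ell$ ``pays for'' reaching, in the $h_S$ metric, every $p_k$ with $h_S(p_j,p_k)\le\ell$, so $\hat\rho_l$ is $h_S$-feasible, i.e., the graph it induces with respect to $h_S$ is strongly connected on $P_l$.

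The key cost bookkeeping step is then to charge $\hat\rho_l$ against $\rho^*$. Since $P_l$ and $P_r$ are vertex-disjoint and $T(P_l)$, $T(P_r)$ are disjoint subtrees of $R$, and since (by the choice of the four points $p_l,p_{l'},p_{r'},p_r$) no edge of $\rho^*$ connects $T(P_l\setminus\{p_l\})$ to $T(P_{r'})$ nor $T(P_r\setminus\{p_r\})$ to $T(P_{l'})$, the portions of $\rho^*$ responsible for keeping $P_l$'s trees mutually connected are disjoint from those responsible for $P_r$'s trees. For each $p_j$, the range $\hat\rho_l(p_j)$ is realized by an $h_S$-distance which is the Euclidean length of an actual $\rho^*$-edge out of $T(p_j)$; assigning that edge's cost to the root $p_j$ and summing, one gets $\sum_{p_j\in P_l}\hat\rho_l(p_j)+\sum_{p_j\in P_r}\hat\rho_r(p_j)\le \sum_{v\in S}\rho^*(v)=OPT$, because every edge of $\rho^*$ is charged at most once (an edge with tail in $T(p_j)$ contributes only to $\hat\rho$ at the root of that tree). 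Finally, by optimality of the \emph{1D RA} algorithm on \orderedline\ instances, $OPT'=\sum_{v\in P_l}\rho_l(v)+\sum_{v\in P_r}\rho_r(v)\le \sum_{v\in P_l}\hat\rho_l(v)+\sum_{v\in P_r}\hat\rho_r(v)\le OPT$, which is the claim.

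The main obstacle I anticipate is making the charging argument airtight: one must verify that $\hat\rho_l$ as defined is genuinely $h_S$-strongly-connected on $P_l$ (not merely that each individual $\rho^*$-edge is accounted for), which requires translating a strongly connected subgraph of $G_{\rho^*}$ on all of $S$ into a strongly connected subgraph of the $h_S$-graph on $P_l$ alone — here one uses that a $\rho^*$-path between two points of $T(P_l)$ that wanders outside $T(P_l)$ must, by the no-cross-edge property for these four chosen points together with strong connectivity, still induce the required reachability among the roots $p_1,\dots,p_l$ after collapsing trees, possibly rerouting through $p_l$. A secondary subtlety is the double-counting issue the authors flag (``an assignment to a node is charged more than once''): one should note that even if $\hat\rho_l$ nominally charges a vertex of $S$ more than once across the collapse, the inequality $OPT'\le OPT$ is about the sum over $P_l\cup P_r$ only and each $\rho^*$-edge tail lies in a unique tree, so the bound survives. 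I would present this as a short lemma preceding the displayed chain of inequalities rather than inline.
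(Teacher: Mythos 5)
Your overall strategy is the same as the paper's: collapse each tree of $R$ onto its root, derive from $\rho^*$ a range assignment on $P_l\cup P_r$ that is valid with respect to $h_S$ and costs at most $OPT$, and then invoke the optimality of the \emph{$1$D RA} algorithm on \orderedline\ instances to conclude $OPT'\leq OPT$. However, the step you yourself flag as the ``main obstacle'' --- showing that the collapsed assignment induces a strongly connected $h_S$-graph on $P_l$ alone --- is genuinely missing, and your particular definition of $\hat\rho_l$ makes it break down. You charge $\hat\rho_l(p_j)$ only against $\rho^*$-edges whose tails lie in $T(p_j)$ for $p_j\in P_l$. But a $\rho^*$-path between two points of $T(P_l)$ may detour through $T(P_x)$ (the choice of the four points only forbids edges between $T(P_l\setminus\{p_l\})$ and $T(P_{r'})$, so edges into $T(p_1),\dots,T(p_m)$ are allowed); the edge that brings such a path back into $T(P_l)$ has its tail in some $T(p_{j'})$ with $l< j'\leq m$, whose root is not a vertex of $P_l$, so under your scheme no vertex of $P_l$ pays for it and the corresponding $h_S$-edge out of $p_l$ need not exist. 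Strong connectivity of the $h_S$-graph on $P_l$ is therefore not established.

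The paper closes exactly this hole by treating the shared endpoint differently from the interior vertices: it sets $\rho(p_l)=\max\{\rho^*(v): v\in T(p_i),\ l\leq i\leq m\}$ (and symmetrically for $p_r$), while $\rho(p_j)=\max_{v\in T(p_j)}\rho^*(v)$ for the remaining $p_j$. With this definition, every intermediate point of a $\rho^*$-path lying in $T(P_x\cup P_r)$ is replaced by $p_l$, every point of $T(P_l)$ by its root, and the single chain $\rho(y_i)\geq\rho^*(u_i)\geq|u_iu_{i+1}|\geq h_S(y_i,y_{i+1})$ certifies that the resulting sequence is a path in the $h_S$-graph; here it is essential that $h_S(\cdot,p_l)$ is by definition a minimum over targets in all of $T(p_l),\dots,T(p_m)$, which is what makes the collapse onto $p_l$ legitimate. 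The cost bound then needs none of your edge-charging bookkeeping: each point of $S$ lies in exactly one tree and contributes to at most one of the maxima, so $\sum_j\rho(p_j)\leq\sum_{v\in S}\rho^*(v)=OPT$. I recommend replacing your definition of $\hat\rho_l$ by this max-over-the-tree definition and writing out the path-replacement argument explicitly rather than leaving it as an anticipated difficulty.
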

\begin{proof}
We show that the optimal assignment $\rho^*$ can be adjusted to an assignment\\
\mbox{$\rho:P_l \cup P_r \rightarrow \mathds{R}^+$}, valid for $P_l$ and $P_r$, separately, with respect to $h_S$, of the same cost.\\
We define,
$$\rho(p_l)=\max_{ \substack{v\in T(p_i),\\ l\leq i \leq m}} \{\rho^*(v)\}, \qquad 
\rho(p_r)=\max_{ \substack{v\in T(p_i),\\ m+1\leq i \leq r}} \{\rho^*(v)\},$$
and for every $p_j$ with $j \in \{1,..,l-1\} \cup \{r+1,..,z\}$, \ \  $\rho(p_j)=\max_{v\in T(p_j)} \{ \rho^*(v)\}.$

Let $u$ and $v$ be two points in the same sub-path, w.l.o.g., $P_l$, 
and let $(u=u_1,u_2,...,u_k=v)$ be the path from $u$ to $v$ in $G_{\rho^*}$.
Consider the sequence $(u=y_1,y_2,...,y_k=v)$, obtained by replacing every $u_i \in T(P_l)$ with $y_i=r(u_i)$,
and every $u_i \in T(P_x\cup P_r)$ with $y_i=p_l$, for $1 \leq i \leq k$.
We prove the above sequence forms a path from $u$ to $v$ 
in the graph induced by $\rho$ with respect to $P_l$ and $h_S$, and conclude that $\rho$ is valid and $OPT' \leq OPT$.
Consider a pair of consecutive nodes in the above sequence, $(y_i,y_{i+1})$. 
Note that if $u_i=p_j$ (resp. $u_{i+1}=p_j$) for $1 \leq j < l$, than $u_{i+1}=p_h$ (resp. $u_{i}=p_h$) for $1 \leq h \leq m$; thus,
for every $1 \leq i < k$, by the definition of $\rho$ and $h_S$ we have, 
$\rho(y_i)\geq \rho^*(u_i) \geq |u_i u_{i+1}|\geq h_S(y_i, y_{i+1})$ 
and the claim follows.
\end{proof}

Let $\sim$$g(\rho'):P_{l'}\cup P_{r'} \rightarrow \mathds{R}^+$ be the union of the assignments obtained
after applying the transformation $g$ on each of $P_l$ and $P_r$, separately.
The following lemma bounds its cost.
\begin{lemma}\label{lem:g(rho)}
$cost(\sim$$g(\rho')) \leq [c_s + (c_k + 2c_s(c_k +1))(r+t)]OPT.$
\end{lemma}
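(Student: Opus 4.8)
The plan is to bound the cost of $g(\rho')$ term by term according to the two contributions in the definition of the transformation $g$: the multiplicative part $c_s \cdot \rho'(p_j)$ and the additive part $c_k \cdot w(T(p_j))$. First I would observe that summing the multiplicative part over $P_{l'}$ and $P_{r'}$ gives exactly $c_s \cdot OPT'$, and by Claim~\ref{cl:OPT'} this is at most $c_s \cdot OPT$. This accounts for the leading term $c_s \cdot OPT$ in the claimed bound, and it is the ``cheap'' part of the argument.

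Next I would handle the additive part. Summing $c_k \cdot w(T(p_j))$ over all $p_j$ in $P_{l'}$ and $P_{r'}$ gives $c_k$ times the total weight of the trees in (the updated) $R$ that are rooted at points of the flattened sub-paths; since applying the \emph{flatten} procedure $f$ to $P_{e^l}$ and $P_{e^r}$ only moves further portions of $P_M$ into $R$, we have $w(R) \leq (r+t)W$ (as already noted in the text preceding the lemma). Hence the additive contribution over $P_{l'} \cup P_{r'}$ is at most $c_k (r+t) W \leq c_k (r+t) OPT$, using $OPT \geq W$. This gives the $c_k(r+t)$ summand.

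The remaining work is to account for the ranges assigned to points $p_j$ with $l < j \leq m$ (those lying strictly between $p_l$ and $p_m$, which receive only the additive part $c_k \cdot w(T(p_j))$) together with the second stage of the transformation, in which each $p_j$ directs the sub-path between $p_{j^-}$ and $p_{j^+}$ towards itself. The point of the second stage is that each such $p_j$ may need to increase its range to cover the span from $p_{j^-}$ to $p_{j^+}$; by the choice of $j^-$ and $j^+$, the Euclidean distances $|p_j u|$ to the witnessing points $u \in T(p_{j^-})$, $u' \in T(p_{j^+})$ are at most $c_k \cdot w(T(p_j))$, and the sub-path length between $p_{j^-}$ and $p_{j^+}$ — which is what actually has to be paid — can be bounded in terms of these tree weights plus the weights of the intermediate trees. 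The key estimate here is that $w(P_{l'})$ between consecutive ``anchor'' points is controlled by the tree weights, so the total extra cost incurred in the second stage telescopes to a constant multiple of $(1+c_s)(r+t) W$; assembling the constants gives the coefficient $c_k + 2c_s(c_k+1)$, where the factor $2$ reflects that each tree weight may be charged on both its left and its right. I expect this last step — carefully bounding the second-stage reassignment cost and verifying that the constants combine to exactly $c_k + 2c_s(c_k+1)$ — to be the main obstacle, since it requires a delicate charging argument ensuring no portion of $w(R)$ is counted more than a bounded number of times, and it is precisely the place where the definition $c_k = 1 + 8(1+c_s)$ is engineered to make the later approximation-ratio arithmetic close.
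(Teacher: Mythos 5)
Your decomposition matches the paper's: the multiplicative part sums to $c_s\cdot OPT'\le c_s\cdot OPT$ via Claim~\ref{cl:OPT'}, the additive part sums to $c_k\, w(R)\le c_k(r+t)W\le c_k(r+t)OPT$, and the remaining coefficient $2c_s(c_k+1)$ must come from the second stage of $g$. The gap is in how you propose to pay for that second stage. You suggest bounding the sub-path length from $p_{j^-}$ to $p_{j^+}$ ``in terms of these tree weights plus the weights of the intermediate trees,'' but that cannot work as stated: the sub-path consists of edges of $P_{l'}$, whose total length is part of $w(P_M)$ and has no a priori relation to the weights of the trees of $R$ hanging off it, so it cannot be charged against $(r+t)W$ that way. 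The mechanism the paper actually uses is the \emph{flatten} guarantee: since $P_{l'}$ is an output of $f$, no two of its points have stretch factor exceeding $c_s$, hence $\delta_{P_{l'}}(p_{j^-},p_{j^+})\le c_s|p_{j^-}p_{j^+}|$; the triangle inequality through $p_j$ and the two witnessing points $u\in T(p_{j^-})$, $u'\in T(p_{j^+})$ (each within distance $c_k\,w(T_j)$ of $p_j$) then gives $|p_{j^-}p_{j^+}|\le 2c_k\,w(T_j)+w(T_{j^-})+w(T_{j^+})$. This is the step that converts path length into tree weights, and it is missing from your plan.

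A second point you gesture at but must still make precise is the charging: summing $c_s\bigl(2c_k\,w(T_j)+w(T_{j^-})+w(T_{j^+})\bigr)$ naively over all $j$ could count a fixed tree many times as somebody's $j^-$ or $j^+$. The paper observes that it suffices to direct the path only for the rightmost $j$ with a given $i=j^-$ and the leftmost $k$ with $i=k^+$, so each tree $T_i$ is charged once as a left anchor and once as a right anchor, yielding $Y_j=w(T_j)\bigl[c_k+c_s(2c_k+2)\bigr]$ and the stated coefficient after summing $\sum_j w(T_j)=w(R)\le(r+t)W\le(r+t)OPT$. Your ``factor $2$ for left and right'' is the right intuition for this part, but without the flatten-based bound above the constants do not assemble.
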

\begin{proof}
Consider applying transformation $g$ on $P_{l}$ and $P_{r}$.
By multiplying the range of every point in $P_{l}\cup P_{r}$ by $c_s$
we obtain an assignment of cost $c_s\cdot OPT'$.
As for the additive part and the second stage, we analyze the cost with respect to $P_{l'}$, while the case for $P_{r'}$ is symmetric.
Every $p_j \in P_{l'}$, is responsible for an additional of at most $X_j =
 c_k \cdot w(T_j)  + \delta_{P_{l'}}(p_{j^-},p_{j^+}) \leq c_k \cdot w(T_j) + c_s|p_{j^-} p_{j^+}|  =
 c_k \cdot w(T_j)  + c_s[2c_k \cdot w(T_j) + w(T_{j^-}) + w(T_{j^+}) ]$
to the total cost,
where $p_{j^-}$ and $p_{j^+}$ are defined as in the definition of $g$.
The first element in the summation is the range added to $p_j$ itself,
and the second is the cost of directing the path between $p_{j^-}$
and $p_{j^+}$ towards $p_j$ (depicted in Fig.~\ref{fig:transformation} in black).

Allegedly, for computing $cost(\sim$$g(\rho'))$ we should sum $X_j$ over all $p_j \in P_{l'} \cup P_{r'}$ and add it to the cost $c_s\cdot OPT'$,
however, we observe that it is suffices to consider a point $p_i$ only once
as $p_{j^-}$, for the rightmost point $p_j$ such that $i=j^-$ and only once
as $p_{j^+}$, for the leftmost point $p_k$ such that $i=k^+$.
Thus, we can charge $p_{j^-}$ and $p_{j^+}$ themselves once on 
each of the elements $c_s \cdot w(T_{j^-})$ and $c_s \cdot w(T_{j^+})$ in the overall summation.
Namely, charge every point $p_j$ for a total range increase of $Y_j=w(T_j)[c_k  + c_s(2c_k +2)]$.
Summing $Y_j$ over all $p_j \in P_{l'}\cup P_{r'}$ and adding the cost $c_s \cdot OPT'$, using Claim~\ref{cl:OPT'} gives 
\begin{align*}
cost(g(rho')) & \leq c_s \cdot OPT' + \sum_{1 \leq j \leq z} w(T_j)[c_k  + c_s(2c_k +2)]\\
							& = c_s \cdot OPT' + [c_k + 2c_s(c_k +1)](w(R))\\
							& \leq [c_s + (c_k + 2c_s(c_k +1))(r+t)]OPT.
\end{align*}
\end{proof}

Note that $g$ has already assigned to every $p_j \in P_{l'}\cup P_{r'}$ an assignment greater than $w(T(p_j))$.
Directing all trees in $R$ towards their roots,
directing the path $P_x$ and adding the edge $(p_l,p_r)$, adds to the cost
at most $(2x+(r+t)) W < (2x+(r+t)) OPT$, and
together with Lemma~\ref{lem:g(rho)} we receive
\begin{align}\label{eq:v}
\frac{SOL}{OPT} & \leq c_s + (c_k + 2c_s(c_k +1)+1)(r+t)+2x
\end{align}
\begin{lemma}\label{lem:approx2}
If condition L\ref{lem:sf}.2 holds for $\delta = c_h W$,
then $SOL \leq (1.5-\epsilon)OPT$.
\end{lemma}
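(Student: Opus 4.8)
\textbf{Plan of proof for Lemma~\ref{lem:approx2}.}
The plan is to argue by contradiction: assume $SOL > (1.5-\epsilon)OPT$ and combine the three upper bounds already derived for Solutions (ii), (iii), and (iv) --- equations~(\ref{eq:iii}), (\ref{eq:iv}), and (\ref{eq:v}) --- together with the constraint $r < 2\epsilon$ from Corollary~\ref{cor:Soli}, to force an impossible numerical relation among the parameters $r$, $t$, $x$, and $l$. The point is that each solution is cheap in a different regime, so no single regime can make all three simultaneously expensive.

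First I would record what each bound says when $SOL > (1.5-\epsilon)OPT$. From~(\ref{eq:iii}), dropping the favorable $l,r$ terms exactly as in the proof of Lemma~\ref{lem:approx1}, we get $1.5-\epsilon < 1.5 - \tfrac12((t-c_h)(1-\tfrac1{c_s}) - r)$, which with $r<2\epsilon$ and $c_h = 20\epsilon$, $c_s = 5/4$ yields an upper bound on $t$: roughly $t - c_h < \tfrac{2(\epsilon + r/2)}{1-1/c_s}$, i.e. $t = O(\epsilon)$. From~(\ref{eq:v}), since $c_s = 5/4 < 1.5-\epsilon$ is a fixed slack, we get $(c_k + 2c_s(c_k+1)+1)(r+t) + 2x > (1.5-\epsilon) - c_s = 1/4 - \epsilon$; with $r+t = O(\epsilon)$ this forces $x$ to be bounded below by essentially $1/8 - O(\epsilon)$, i.e. the middle part $P_x$ carries a constant fraction of the weight. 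Finally from~(\ref{eq:iv}), $1 + \tfrac12(1 - x + (r+t)) + 2l > 1.5-\epsilon$ gives $\tfrac12(1-x) > \tfrac12 - \epsilon - 2l - \tfrac12(r+t)$, i.e. $x < 4\epsilon + 4l + (r+t) = 4l + O(\epsilon)$.

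Now the contradiction: the Solution~(iv) bound forces $x$ large ($x > 1/8 - O(\epsilon)$) while the Solution~(iii) bound forces $x$ small ($x < 4l + O(\epsilon)$), so we must have $4l + O(\epsilon) > 1/8 - O(\epsilon)$, hence $l$ bounded below by a constant like $l > 1/32 - O(\epsilon)$. But a large $l$ is exactly case~(A1): a $\sim$large longest edge makes the Hub algorithm good. Concretely, plugging $r < 2\epsilon$ and $l$ large into~(\ref{eq:i}) for Solution~(i), $\tfrac{SOL}{OPT} \le \tfrac{1.5 - (r-l)/2}{1+l} \le \tfrac{1.5 + l/2}{1+l}$, which is decreasing in $l$ and already drops below $1.5-\epsilon$ once $l$ exceeds a small constant threshold. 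Choosing $\epsilon = 5/10^5$ small enough that all the "$O(\epsilon)$" slacks are dominated by the fixed gaps (the $1/4$ slack in~(\ref{eq:v}), the $1/8$-type constant from combining~(\ref{eq:iv}) and~(\ref{eq:v}), and the threshold on $l$ from~(\ref{eq:i})), every branch yields $SOL \le (1.5-\epsilon)OPT$, contradicting the assumption.

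The main obstacle I anticipate is purely bookkeeping rather than conceptual: one must verify that with the specific constants $c_s = 5/4$, $c_k = 19$, $c_h = 20\epsilon$, and $\epsilon = 5/10^5$, the chain of inequalities actually closes with strict slack --- in particular that the coefficient $c_k + 2c_s(c_k+1)+1$ in~(\ref{eq:v}), which is about $71$, times $r+t = O(\epsilon)$ stays comfortably below $1/4$, and that the threshold on $l$ extracted from combining~(\ref{eq:iv}) and~(\ref{eq:v}) is genuinely below the threshold at which~(\ref{eq:i}) already wins. I would therefore carry out the calculation by first treating $\epsilon$ as a symbolic small parameter, deriving the constant-order contradiction $l \gtrsim 1/32$ versus the Solution~(i) requirement, and only at the end check that $\epsilon = 5/10^5$ leaves room; if it does not, the natural fix is to shrink $\epsilon$, which the authors explicitly say they are free to do.
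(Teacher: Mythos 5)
Your proposal is correct and follows essentially the same route as the paper's own proof: assume $SOL>(1.5-\epsilon)OPT$, use Corollary~\ref{cor:Soli} to get $r<2\epsilon$, then chain equation~(\ref{eq:iii}) to bound $t=O(\epsilon)$, equation~(\ref{eq:v}) to force $x>1/8-O(\epsilon)$, equation~(\ref{eq:iv}) to force $l>1/32-O(\epsilon)$, and finally equation~(\ref{eq:i}) to obtain a numerical contradiction with $\epsilon=5/10^5$. The remaining work is, as you say, only the arithmetic verification that the paper carries out explicitly.
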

\begin{proof}
Assume towards contradiction that $SOL > (1.5-\epsilon)OPT$. 
By Corollary~\ref{cor:Soli}, $r<2\epsilon$ and together with equation~(\ref{eq:iii}) we receive,
\begin{align*}
1.5-\epsilon  &< \frac{1.5-\frac{1}{2}((t-c_h)(1-\frac{1}{c_s})-l-r)}{1+l}  < 1.5-\frac{1}{2}((t-20\epsilon)\frac{1}{5}-r)  \\
							&< 1.5-\frac{t}{10}+3\epsilon \ \ \qquad \qquad  \Rightarrow   \ \ \ t  < 40\epsilon. 
\end{align*}
Replacing $r$ and $t$ with the above upper bounds in equation~(\ref{eq:v}) gives,
\begin{align*}
						1.5-\epsilon &< c_s + (c_k + 2c_s(c_k +1)+1)(r+t)+2x \\
							 					 &< 1\frac{1}{4}+70(42 \epsilon)+2x \ \ \qquad \Rightarrow \ \ \ x > \frac{1}{8}-1472\epsilon,
\end{align*}
and by equation~(\ref{eq:iv}) we have,
\begin{align*}
1.5-\epsilon &< 1+ \frac{1}{2}(1-x + (r+t)) + 2l < 1+ \frac{1}{2}(1-(\frac{1}{8}-1472\epsilon)+ 42\epsilon) + 2l \\
						 &< 1.5 - \frac{1}{16}+757\epsilon+2l  \ \ \ \ \Rightarrow  \ \ \  l > \frac{1}{32}-380\epsilon.							 
\end{align*}
The upper bound on $l$ together with equation~(\ref{eq:i}) imply,
\begin{align*}
1.5-\epsilon  &< \frac{1.5-(r-l)/2}{1+l}  < \frac{1}{2} + \frac{1}{1+l} 
						  < \frac{1}{2} + \frac{1}{1+\frac{1}{32}-380\epsilon} \qquad  \Rightarrow  \ \ \ \epsilon > \frac{8}{10^5}
\end{align*}
in contradiction to our choice of $\epsilon=\frac{5}{10^5}$.
\end{proof}
We conclude with Theorem~\ref{theo:2Dmain}, derived from
Lemma~\ref{lem:sf} together with Lemmas~\ref{lem:approx1} and~\ref{lem:approx2}.

\begin{theorem}\label{theo:2Dmain}
Given a set $S$ of points in $\mathds{R}^d$ for $d \geq 2$,
a minimum cost range assignment $(1.5-\epsilon)$-approximation can be computed in polynomial time for $S$, where $\epsilon=\frac{5}{10^5}$. 
\end{theorem}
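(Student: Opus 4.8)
The plan is to assemble Theorem~\ref{theo:2Dmain} purely as a corollary of the work already done: the case analysis of Lemma~\ref{lem:sf}, the two approximation lemmas, and the validity arguments for the four solutions. First I would fix $\epsilon=5/10^5$ and instantiate the auxiliary constant $c_h=20\epsilon$, then invoke Lemma~\ref{lem:sf} with $\delta=c_hW$: exactly one of conditions L\ref{lem:sf}.1 or L\ref{lem:sf}.2 holds. In the first case, Lemma~\ref{lem:approx1} already gives $SOL\le(1.5-\epsilon)OPT$ directly. In the second case, Lemma~\ref{lem:approx2} does the same. Since the algorithm returns the minimum-cost solution among (i)--(iv), and each of these is a valid range assignment by Section~\ref{sec:validity}, the output $SOL$ is a valid assignment with $SOL\le(1.5-\epsilon)OPT$. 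This is the entire logical skeleton; the two branches are symmetric in spirit and both are already proved.

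Second, I would address the polynomial running time, which the statement also claims but which the lemmas do not explicitly certify. Solutions (i) and (ii) are single applications of the \emph{Hub} algorithm (and its variant), each polynomial. Solutions (iii) and (iv) are wrapped in a double loop: an outer loop over the $O(n)$ edges $e\in P_M$, and an inner loop over $4$-tuples $p_l,p_{l'},p_{r'},p_r$, which is $O(n^4)$ iterations; inside each iteration we run the \emph{flatten} procedure $f$ (polynomial, since it performs at most $n$ shortcuts), compute the distance function $h_S$ on $P_l$ and $P_r$ (polynomial, being a minimum over subtree points), apply the \emph{$1$D RA} algorithm of Kirousis et al.\ (which runs in $O(n^4)$ by the bound cited in the introduction, and correctly since $h_S$ satisfies the \emph{\orderedline} condition), and perform the adjustment transformation $g$ (polynomial). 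Multiplying these polynomial factors gives an overall polynomial bound, so the algorithm runs in polynomial time as claimed.

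Third, I would make sure the validity argument is genuinely covered by Section~\ref{sec:validity} for the particular $4$-tuple the analysis selects in Lemma~\ref{lem:approx2}: there the proof picks the specific points $p_l,p_{l'},p_{r'},p_r$ determined by the optimal assignment $\rho^*$ and the separating edge $e$, but since solutions (iii) and (iv) are computed for \emph{every} $4$-tuple and the returned solution is the cheapest valid one, the bound obtained for that distinguished tuple is an upper bound on $SOL$, and validity of that particular computed solution is exactly what Section~\ref{sec:validity} establishes. Finally I would note that the claim is stated for $\mathds{R}^d$, $d\ge2$; the only place planarity might have been used is in describing the minimum enclosing disk and the `hill' geometry, but as remarked in Section~\ref{sec:PlaneMinCostRange } all arguments (minimum enclosing ball, stretch factors, Euclidean triangle inequalities in the chain of inequalities of Section~\ref{sec:validity} and Lemma~\ref{lem:g(rho)}) go through verbatim in any dimension, so no extra work is needed there.

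The main obstacle I anticipate is not any single inequality — those are already discharged in Lemmas~\ref{lem:approx1} and~\ref{lem:approx2} — but rather verifying that the dichotomy of Lemma~\ref{lem:sf} is genuinely exhaustive and that the constants chosen ($c_s=5/4$, $c_k=19$, $c_h=20\epsilon$, $\epsilon=5/10^5$) are mutually consistent across all four chained bounds~(\ref{eq:i})--(\ref{eq:v}); in particular the final contradiction in Lemma~\ref{lem:approx2} hinges on the numerical slack $\epsilon>8/10^5$ versus the chosen $\epsilon=5/10^5$, so the proof of the theorem really amounts to checking that nothing in the hand-off between lemmas loses a factor. Once that bookkeeping is confirmed, the theorem follows immediately by combining Lemma~\ref{lem:sf} with Lemmas~\ref{lem:approx1} and~\ref{lem:approx2}, exactly as the text states.
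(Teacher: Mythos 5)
Your proposal matches the paper's own derivation: the theorem is obtained exactly by applying Lemma~\ref{lem:sf} with $\delta=c_hW$ and concluding via Lemma~\ref{lem:approx1} in case L\ref{lem:sf}.1 and Lemma~\ref{lem:approx2} in case L\ref{lem:sf}.2, with validity supplied by Section~\ref{sec:validity}. Your added bookkeeping on polynomial running time and on why the bound for the distinguished $4$-tuple upper-bounds $SOL$ is correct and only makes explicit what the paper leaves implicit.
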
 

The reader can notice that our algorithm yields a better approximation bound than stated in the above theorem.
However, we preferred the simplicity of presentation over a more complicated analysis resulting in a tighter bound.

\bibliographystyle{plain}

\end{document}